\documentclass[12pt]{article}
\pdfoutput=1

\usepackage{amsfonts}
\usepackage{amsmath,bm}
\usepackage{amssymb}
\usepackage{graphicx,psfrag,epsf,float}
\usepackage{placeins} 
\usepackage{multirow}
\usepackage{amsthm}
\usepackage{enumerate}
\usepackage{natbib}
\usepackage{booktabs}
\usepackage[hyphens]{url} 
\usepackage{xcolor}
\usepackage{soul}
\usepackage[normalem]{ulem}
\usepackage{hyperref}
\usepackage{authblk}
\usepackage{subcaption}
\usepackage{array}
\usepackage{rotating}
\usepackage{makecell}
\hypersetup{hidelinks}
\usepackage[bottom]{footmisc}
\newtheorem{theorem}{Theorem}
\newtheorem{lemma}{Lemma}
\theoremstyle{definition}
\newtheorem{definition}{Definition}
\newtheorem{assumption}{Assumption}

\usepackage{xr}
\newcommand{\blind}{1}

\addtolength{\oddsidemargin}{-.5in}%
\addtolength{\evensidemargin}{-1in}%
\addtolength{\textwidth}{1in}%
\addtolength{\textheight}{1.7in}%
\addtolength{\topmargin}{-1in}%

\definecolor{emerald}{RGB}{12, 166, 151}
\definecolor{lava}{rgb}{0.81, 0.06, 0.13}

\newcommand{\given}{\mathbin{|}}

\newcommand{\sumStatsTrain}{\Tilde{\bm{\beta}}^\mathrm{(sum)}_{\text{train}}}

\newcommand{\bsumStats}{\bm{\beta}_\mathrm{sum}}

\newcommand{\ld}{\bm{D}}
\newcommand{\ldRef}{\bm{D}_{\text{ref}}}
\newcommand{\VRef}{\bm{V}_{\text{ref}}}

\newcommand{\bv}{\bm{v}}
\newcommand{\eigenvalRef}{\bm{\Delta}_{\text{ref}}}

\newcommand{\bbeta}{\bm{\beta}}
\newcommand{\Ntrain}{N} 

\newcommand{\priorSD}{\lambda}
\newcommand{\priorSDs}{\lambda^2_1,...,\lambda^2_P}
\newcommand{\priorSDmin}{\lambda_{\text{min}}}
\newcommand{\priorSdBd}{\sigma_\textrm{bd}}
\newcommand{\eigenval}{\delta}
\newcommand{\bLambda}{\bm{\Lambda}}

\newcommand{\colsp}{\operatorname{col}}

\newcommand{\thinnerspace}{\mskip.5\thinmuskip}

\let\originalleft\left
\let\originalright\right
\renewcommand{\left}{\mathopen{}\mathclose\bgroup\originalleft}
\renewcommand{\right}{\aftergroup\egroup\originalright}

\pdfminorversion=4

\begin{document}

	\def\spacingset#1{\renewcommand{\baselinestretch}%
		{#1}\small\normalsize} \spacingset{2}

	
	\if1\blind
	{
		\title{\bf Constructing Genetic Risk Scores: Robust Bayesian Approach through Projected Summary Statistics and Flexible Shrinkage}
		\author[1]{Yuzheng Dun}
		\author[1,3]{Nilanjan Chatterjee}
		\author[2,*]{Jin Jin}
		\author[1,*]{Akihiko Nishimura}
		\affil[1]{Department of Biostatistics, Bloomberg School of Public Health, Johns Hopkins University}
		\affil[2]{Department of Biostatistics, Epidemiology and Informatics, University of Pennsylvania}
		\affil[3]{Department of Oncology, School of Medicine, Johns Hopkins University}
		\affil[*]{Corresponding authors: Akihiko Nishimura, anishim2@jhu.edu;
			Jin Jin, jin.jin@pennmedicine.upenn.edu}
		\date{}
		\maketitle
	} \fi
	
	\if0\blind
	{
		\bigskip
		\bigskip
		\bigskip
		\begin{center}
			{\LARGE\bf Constructing Genetic Risk Scores: Robust Bayesian Approach through Projected Summary Statistics and Flexible Shrinkage}
		\end{center}
		\medskip
	} \fi

	\newpage
	\begin{abstract}
		Polygenic risk scores (PRS) developed from genome-wide association studies (GWAS) can be used for risk stratification by quantifying the genetic contribution to disease, and many clinical applications have been proposed. 
		Bayesian methods are popular for building PRS because of their natural ability to regularize models and incorporate external information.
		In this article, we present new theoretical results, methods, and extensive numerical studies to advance Bayesian methods for PRS applications. 
		We identify a potential risk, under a common Bayesian PRS framework, of posterior impropriety when integrating the required GWAS summary statistics and linkage disequilibrium (LD) data from distinct sources. 
		As a principled remedy, we propose a projection of the summary statistics that ensures compatibility between the two sources and in turn a proper behavior of the posterior. 
		We further introduce a new PRS method, with accompanying software, under the less-explored Bayesian bridge prior to more flexibly model varying sparsity levels in effect-size distributions. 
		We extensively benchmark it against alternative Bayesian methods using synthetic and real datasets, quantifying the impact of prior specification and LD estimation strategy. 
		Our proposed PRS-Bridge, equipped with the projection technique and flexible prior, demonstrates the most consistent and generally superior performance across a variety of scenarios.
	\end{abstract}
	
	\noindent%
	{\it Keywords:} polygenic score, shrinkage prior, conjugate gradient sampler, high-dimensional regression, statistical genetics
	\vfill
	
	\newpage
	\spacingset{1.75} 
	
	\section{Introduction}
	\label{sec:intro}
	
	Genome-wide association studies (GWAS) have identified tens of thousands of inherited variants explaining a considerable amount of variations in human traits and diseases across individuals \citep{seibert2018polygenic, uffelmann2021genome}. 
	Polygenic risk scores (PRS), constructed as weighted sums of the number of risk alleles, summarize the effects of genetic markers for human traits and diseases.
	PRS are increasingly used as a quantitative summary of individuals’ genetic liability for complex diseases.
	Extensive research have demonstrated the utility of PRS in disease risk predictions \citep{chatterjee2016developing, khera2018genome}, population screening for early detection of cancers \citep{seibert2018polygenic,mchugh2025assessment,esserman2025risk}, refining biomarkers for diagnoses \citep{kachuri2023genetically}, improving efficiency of clinical trial designs \citep{liu2024integration, german2025incorporating}, and elucidating disease biology \citep{Pradeep2017Polygenic,smith2024multi,pascat2026partitioned}.
	Various clinical uses of polygenic scores are also reviewed by \citet{torkamani2018personal,kullo2022polygenic}, and \citet{kullo2025clinical}.
	
	At a basic level, the task of building PRS can be viewed as a high-dimensional regression problem for predicting outcome variables from hundreds of thousands or millions of genetic variants. 
	Bayesian methods \citep{Campos2010Predicting, vilhjalmsson2015modeling,zhu2017bayesian,lloyd2019improved,ge2019polygenic, ldpred2} have been popular in the PRS field as placing a prior on SNP effect sizes provides a natural way to introduce regularization and borrow external information \citep{ hu2017leveraging, marquez2021incorporating}.
	
	Despite its shared mathematical structure with standard regression problems, PRS development involves unique structures and challenges that complicate an application of existing Bayesian high-dimensional regression machinery. 
	One challenge is the difficulty of access to sizable individual-level data.
	Correspondingly, PRS development typically relies on more accessible GWAS summary statistics data
	obtained from one-SNP-at-time regression. 
	 Reconstructing a multivariate regression model from the summary statistics then requires an estimate, typically obtained from external reference data, of linkage disequilibrium (LD) to account for correlations among genetic variants \citep{vilhjalmsson2015modeling,mak2017polygenic,wang2020simple,pattee2020penalized,chen2021penalized}.
	 While this framework has been utilized extensively, statistical ramifications of integrating the two heterogeneous data sources in Bayesian PRS modeling have never been formally investigated.
	 In fact, the discrepancy in underlying data between GWAS summary statistics and an LD estimate, if not properly accounted for, leads to unexpected and potentially catastrophic inference as we establish in this article. 
	
	Another unique feature of PRS modeling is its unusual effect size distribution or, in a more statistical language, sparsity structure in regression coefficients.
	The high-dimensional regression literature typically assumes a relatively small number of the predictors explain most of the variation in the outcome. 
	This is often not the case in polygenic prediction, however, where thousands to tens of thousands of small-effect genetic variants cumulatively contribute to prediction.
	Further, complex traits demonstrate a wide variety of genetic architectures \citep{zhang2018estimation}, calling for a prior with flexibility to adapt to a wide range of potential effect size distributions.
	While a range of prior families have been explored in the PRS literature \citep{Campos2010Predicting, vilhjalmsson2015modeling,zhu2017bayesian, ge2019polygenic, ldpred2}, the impact of prior choice on model performance remains poorly understood since different methods have been benchmarked under different data sets and LD estimation strategies.

	In this article, we make several important contributions to the Bayesian PRS literature.
	First, we identify and quantify a serious pitfall in the use of the commonly used approximate likelihood based on GWAS summary statistics and genetic correlation matrix from two distinct data sources. 
	We show how the ill-defined nature of the approximate likelihood makes the inference susceptible to the prior choice or even produces an improper posterior. 
	We then develop a principled solution to this ill-behavior, which in the past has been dealt with in an ad hoc manner such as by constraining the prior variance of regression coefficients \citep{ge2019polygenic}.
	Second, we demonstrate the importance of using a prior with sufficient flexibility to account for different genetic architectures. 
	Concretely, we introduce PRS-Bridge, a new PRS method based on the bridge prior \citep{polson2014bayesian}.
	Compared to more popular shrinkage priors such as the horseshoe prior \citep{carvalho2009handling}, the bridge prior better adapts to varying sparsity levels and effect size distributions through its exponent parameter. 
	While there are other shrinkage priors providing similar flexibility \citep{brown2010inference}, we deploy the bridge prior also to take advantage of its major computational advantage in allowing collapsed Gibbs sampling in posterior inference \citep{polson2014bayesian,aki2022prior}.
	Finally, we provide reliable evidence on the relative performance of different Bayesian PRS methods by carrying out one of the most systematic and comprehensive benchmark studies to date.
	We use both simulated and real datasets to benchmark PRS-Bridge against three representative PRS methods---LDpred2,  PRS-CS, and Lassosum---across a range of genetic architectures, data sources, and LD estimation strategies.
	Our benchmark study demonstrates PRS-Bridge's compelling performance and a significant impact of LD estimation strategy on PRS methods' performances.
	
	
	An optimized, open-source implementation of the algorithm is available from a GitHub repository at \url{https://github.com/YuzhengDun1999/PRSBridge}.

	\section{Bayesian PRS Modeling Based on GWAS Summary Statistics and External LD Reference Data}\label{likelihood}
	
	In this section, we lay out the general statistical framework underlying the development of PRS based on GWAS summary data and external LD reference data. 
	We will explain and quantify the statistical and computational ill-behavior that arises from the use of two separate data sources for summary statistics and LD estimation.
	
	\subsection{Real-World Motivation: Developing PRS When Individual-Level Data are Scarce}
	Before proceeding to present the statistical framework, we first introduce a motivating real-data application---predicting individuals' genetic risk for specific diseases---on which we demonstrate our method in Section~\ref{Numerical_studies}.

	For most diseases, we rarely have a sufficient number of cases with individual-level genotype data available to train a predictive model directly.
	In many settings, therefore, PRS model building relies almost exclusively on combining publicly available GWAS summary statistics and an external LD reference panel, where the former provides marginal SNP effect sizes and the latter provides information on the correlation structure among SNPs.
	Table~\ref{tab:GWAS_disease} below provides examples of publicly available GWAS datasets on five disease traits, which we will combine with reference panels from the UK biobank and the 1000 Genomes Project in our benchmark study of PRS methods (Section~\ref{Numerical_studies}).
	Note how the summary-level datasets provide far larger sample sizes than the individual-level data provided by the UK biobank data. 
	This difference in sample size highlights the importance of GWAS summary data-based PRS methods.
	
	Despite their widespread use, PRS construction from summary statistics faces a critical, yet under-investigated, challenge.
	The required GWAS summary statistics and LD reference data are almost always obtained from two separate studies with distinct cohorts. 
	This raises the question: when the two datasets arise from the same underlying genetic population, how can we perform valid inference for SNP effect sizes and generate reliable risk prediction?
	Most existing methods implicitly assume this data mismatch to be ignorable;
	however, PRS modeling in practice often encounters unstable behaviors such as non-convergence of effect estimates. 
	The key issue is that, when the two data sources are incompatible, the commonly used likelihood formulation for summary statistics does not necessarily represent a coherent probabilistic model.
	We elucidate this issue and develop a principled solution in the subsequent sections.
	
		\begin{table}[htb]
			\centering
			\caption{%
				Information on the GWAS summary, tuning, and validation data for the five disease traits studied in Section~\ref{Real_data_binary}.
				Incident cases and controls from the UK Biobank are evenly split into tuning and validation sets.
			}
			\label{tab:GWAS_disease}
			\small
			\renewcommand{\arraystretch}{1.5} 
			\setlength{\tabcolsep}{0.5pt} 
			\spacingset{1}
			\begin{tabular}{ >{\centering\arraybackslash}m{3.4cm}>{\centering\arraybackslash}m{3.4cm}>{\centering\arraybackslash}m{4cm}>{\centering\arraybackslash}m{5.3cm}}
				\toprule
				Disease & GWAS Reference & GWAS Sample Size\linebreak (Case/Control)& \mbox{Tuning and Validation}\linebreak \mbox{Sample Size (Case/Control)}\\ 
				\midrule
				Breast Cancer & \citet{michailidou2017association} & 228,951\linebreak (122,977/105,974) & 8,340\linebreak (4,170/4,170)\\
				Coronary Artery Disease & \citet{cardiogramplusc4d2015comprehensive} & 184,305\linebreak (60,801/123,504) & 33,016\linebreak (14,054/20,000)\\ 
				Depression & \citet{wray2018genome} & 124,430\linebreak (45,645/97,674) & 19,355\linebreak (6,383/20,000)\\ 
				Inflammatory Bowel Disease & \citet{liu2015association} & 34,652\linebreak (12,882/21,770) & 10,332 \linebreak(2,966/20,000)\\ 
				Rheumatoid Arthritis & \citet{okada2014genetics} & 58,284\linebreak (14,361/43,923) & 14,053\linebreak (4,262/10,000)\\ 
				\bottomrule
			\end{tabular}
		\end{table}
	
	\subsection{Likelihood Based on GWAS Summary Statistics and External LD Reference Data}\label{likehood_sumstat}
	When individual-level data is available, PRS model development can be viewed as a linear regression problem with the familiar mathematical structure:
	\begin{equation}\label{lm_model}
		\bm{y}=\bm{X}\bm{\beta}+\bm{\epsilon}, \ \bm{\epsilon} \sim \mathcal{N}\left(\bm{0},\sigma_{\epsilon}^2\bm{I}\right),
	\end{equation}
	where $\bm{y}$ is the phenotype vector of $N$ training individuals, $\bm{X}$ the $N\times P$ genotype matrix, $\bm{\beta}$ the $P$-dimensional vector representing effect sizes of underlying genetic variants, and $\bm{\epsilon}$ the error term.
	The demographic covariates such as age, gender, and the top ten principal components are usually regressed out first so that the PRS model only includes the genetic markers as predictors \citep{choi2020tutorial}.
	A set of genetic variants typically included in the PRS model comprises single nucleotide polymorphisms (SNPs), the most common type of DNA variations across individuals. 
	The genotype of a SNP takes values of 0, 1, or 2 depending on the number of the risk allele copies carried by individuals in their two parental chromosomes. 
	For the rest of our discussion in this section, we assume both $\bm{y}$ and $\bm{X}$ have been centered and standardized.
	
	While conceptually equivalent to a linear regression problem, PRS development is distinguished by its routine reliance on GWAS summary-level data instead of less accessible individual-level data.
	From standardized individual-level data, GWAS summary data is generated as estimates of marginal effect sizes through one-SNP-at-a-time regression:
	$$\beta_{\mathrm{sum},j}=\bm{x}_j^T\bm{y}/N \ \text{ for } \, j=1,\ldots,P,$$ 
	where $\bm{x}_j$ denotes the $j$-th column of the standardized genotype matrix $\bm{X}$. 
	If we had access to the individual-level data $\bm{X}$, the model (\ref{lm_model}) for individual-level data implies that the likelihood of summary statistics $\bsumStats = N^{-1} \bm{X}^T \bm{y}$ would be given by
	\begin{equation}
	\label{eq:indiv_level_summary_data_likelihood}
	\bsumStats\given \bm{X},\sigma_{\epsilon}^2\sim\mathcal{N} \left(\frac{\bm{X}^T\bm{X}}{N}\bm{\beta}, \frac{1-h^2}{N}\frac{\bm{X}^T\bm{X}}{N} \right),
	\end{equation}
	where $h^2=1-\sigma_{\epsilon}^2$ represents the amount of trait variation attributable to the additive effect of genetic variants and is referred to as heritability in population genetics.
	
	We cannot, in practice, use the likelihood \eqref{eq:indiv_level_summary_data_likelihood} to infer the effect size $\bbeta$ since the summary statistics comes without the individual-level genotype matrix $\bm{X}$ and thus without the empirical covariance $\bm{X}^T \bm{X}$.
	One way to get around this problem is to view the rows of $\bm{X}$ as independent and identically distributed draws and take the expectation over this underlying population.
	Following the derivations in \cite{zhu2017bayesian}, we can then derive an approximate likelihood in terms of the population correlation $\bm{D} = \mathop{\mathbb{E}}\left(\bm{X}^T\bm{X}/N\right)$:
	\begin{equation}\label{likelihood_sum0}
	\bsumStats \given \bm{\beta}, \bm{D}\sim\mathcal{N}\left( \bm{D}\bm{\beta}, \frac{\bm{D}}{N} \right),
	\end{equation}
	
	\noindent
	where $\bm{D}$ is typically referred to as the LD matrix in population genetics.
	We then substitute the population-level $\bm{D}$ with an empirical estimate based on an external reference panel.
	For this purpose, we can use publicly available individual-level data, such as those through the popular 1000 Genomes (1000G) Project \citep{siva20081000}.
	By substituting this estimated LD matrix $\bm{D}_{\text{ref}} = \bm{X}_\textrm{ref}^T \bm{X}_\textrm{ref}/N_\textrm{ref}$ for $\bm{D}$ in the likelihood (\ref{likelihood_sum0}), we obtain the following approximate likelihood based on GWAS summary data and external LD reference data:
	\begin{equation}\label{likelihood_sum}
		\bsumStats\given \bm{\beta},\bm{D}_{\text{ref}} \sim \mathcal{N}\left( \bm{D}_{\text{ref}}\bm{\beta},\frac{\bm{D}_{\text{ref}}}{\Ntrain} \right).
	\end{equation}
	
	Many of the existing PRS methods are based on the above approximate likelihood for summary statistics.
	As we will show in next section, however, this approximate likelihood is not a proper likelihood.
	Importantly, its naive use under the Bayesian framework results in a posterior distribution with unexpected and undesirable behavior.

	\subsection{Nominal Posterior's Ill-behavior Caused by Mismatch between GWAS Summary Statistics and LD Reference Data}\label{ill-behavior}
	In Section \ref{likehood_sumstat}, we presented an approximate likelihood, widely used in the PRS problem, that combines summary data from one data source and LD matrix from another.
	We now show that the approximation can break down under a mismatch between the two data sources, leading to an unexpected and potentially catastrophic inference.
	
	The issue stems from the fact that an empirical estimate $\ldRef$ of a high-dimensional covariance matrix is unstable and, worse, often rank-deficient due to the limited reference sample size and high correlation among SNPs.
	When $\bm{D}_{\text{ref}}$ is singular, the likelihood (\ref{likelihood_sum}) represents a degenerate Gaussian distribution supported on the subspace $\text{col}(\ldRef) \subsetneq \mathbb{R}^P$, where $\text{col}(\ldRef)$ denotes the column space of $\ldRef$.
	This degeneracy is not a problem on its own;
	in fact, when $\ldRef$ is estimated from the same data source as $\bsumStats$ and hence $\bm{X}_\textrm{ref} = \bm{X}_\textrm{train}$, the likelihood makes sense because $\bsumStats = N_\textrm{ref}^{-1} \bm{X}_\textrm{ref}^T \bm{y}_\textrm{ref}$ indeed lies in the column space of $\ldRef = N_\textrm{ref}^{-1}\bm{X}_\textrm{ref}^T \bm{X}_\textrm{ref}$.
	For $\bsumStats$ and $\ldRef$ coming from two different data sources, however, it is likely to have a situation $\bsumStats \notin \text{col}(\ldRef)$;
	 i.e.\ $\bsumStats$ lying outside the support of the degenerate Gaussian likelihood \eqref{likelihood_sum}.
	This represents an impossible event and, when this occurs, the approximate likelihood for $\bsumStats$ makes no sense as a function of the parameter $\bm{\beta}$.

	Curiously, if we were to ignore the ill-defined nature of the approximate likelihood, we can formally derive an apparently proper posterior distribution under a Gaussian prior $\bm{\beta} \sim \mathcal{N}\left(\bm{0}, \bm{\Sigma}_0\right)$.
	This posterior, which we call a \textit{nominal} posterior, is given as follows:
	\begin{equation}\label{nominal_post}
		\bbeta\given\bsumStats,\ldRef,\bm{\Sigma}_0\sim\mathcal{N}\left(\left(\Ntrain \ldRef+\bm{\Sigma}_0^{-1}\right)^{-1}\Ntrain\bsumStats,\left(\Ntrain\ldRef+\bm{\Sigma}_0^{-1}\right)^{-1}\right).
	\end{equation}
	The apparent propriety of this nominal posterior likely explains why the ill-defined likelihood has so far been overlooked in the PRS literature.
	As we will show in Theorem~\ref{thm2} below and additionally in Theorem~\ref{thm1} of the supplement Section~\ref{proof1}, however, the nominal posterior exhibits highly problematic behaviors.

	To deal with high-dimensionality of the problem, Bayesian PRS methods typically deploy a sparsity-inducing prior on $\bm{\beta}$ with a scale-mixture representation $\beta_j \given \tau, \priorSD_j \sim \mathcal{N}(0, \tau^2 \priorSD_j^2)$ with global scale $\tau$, which we here assume to be fixed for simplicity, and local scale  $\priorSD_j\sim\pi(\priorSD_j)$.
	The posterior inference based on these methods typically relies on a Gibbs sampler, alternately sampling from the nominal posterior of \eqref{nominal_post} with $\bm{\Sigma}_0 = \tau^2 \bm{\Lambda}^2$ for $\bm{\Lambda} = \operatorname{diag}(\lambda_1, \ldots, \lambda_P)$ and from $\bm{\lambda} \given \bm{\beta}, \tau$. 
	This Gibbs sampler may never converge, however; two proper conditional distributions do not necessarily form a proper joint distribution  \citep{Andrew1993characterizing, hobert1998functional_compatibility}.
	This is a particularly germane concern given the ill-defined nature of the approximate likelihood under the data mismatch.
	
	The joint posterior distribution, if it exists, can be expressed in terms of the conditional distributions as
	\begin{equation}\label{joint_post}
		\pi\left(\bbeta, \bm{\lambda} \given \bsumStats,\ldRef, \tau \right)
		\propto
			\frac{
				\pi\left(
					\bbeta \given \bm{\lambda}, \bsumStats,\ldRef, \tau
				\right)
				\, \pi\left(
					\bm{\lambda} \given\hat{\bbeta},\bsumStats,\ldRef, \tau
				\right)
			}{
				\pi\left(\hat{\bbeta} \, \middle| \,  \bm{\Sigma}_0,\bsumStats,\ldRef,  \tau\right)
			}
	\end{equation}
	for any $\hat{\bbeta}$ \citep{Andrew1993characterizing}.
	We will refer to the potentially improper density function satisfying the above relation as the \textit{joint nominal posterior}.
	We show below that, if the prior on $\beta_j$ has a tail that decays slower than exponential, then the joint nominal posterior is improper under the mismatched case.
	Importantly, while each of the two conditional distributions is well-defined on its own, the Gibbs sampler alternately sampling from each never converges.
	
	In the theorem statement below, we denote the sample covariance of the GWAS training data by $\ld_{\mathrm{train}} = \Ntrain^{-1} \bm{X}_\mathrm{train}^T \bm{X}_\mathrm{train}$.
	The theorem technically requires an assumption (Assumption~\ref{continuous}) to preclude a situation that is practically impossible but mathematically imaginable; 
	we defer its statement to the supplement due to its limited practical interest.
	
	\begin{theorem}\label{thm2}
		The following result holds for almost every realization of $\bsumStats$ under Assumption~\ref{continuous}. 
		Consider a joint nominal posterior distribution $\bm{\beta},\priorSDs \given \bsumStats, \bm{D}_{\mathrm{ref}}, \tau$ under a heavy-tailed prior on $\beta_j$, represented as a scale-mixture  $\beta_j \given \tau, \priorSD_j \sim \mathcal{N}(0, \tau^2 \priorSD_j^2)$ with $\priorSD_j\sim\pi(\priorSD_j)$.
		The joint nominal posterior is improper when $\operatorname{null}( \bm{D}_{\mathrm{ref}} ) \not\subset \operatorname{null}(\bm{D}_{\mathrm{train}})$,
		but proper when $ \bm{D}_{\mathrm{ref}} = \bm{D}_{\mathrm{train}} $ or, more generally, when $\operatorname{null}( \bm{D}_{\mathrm{ref}} ) \subset \operatorname{null}(\bm{D}_{\mathrm{train}})$.
	\end{theorem}
	\noindent
	The proof, along with a formal definition of heavy-tailed distributions, is in the supplement Section~\ref{proof2}.
	
		Note that the mismatched case, and the resulting impropriety of the nominal posterior, is more the norm than the exception in practice.
		The reference sample size used to calculate the $P \times P$ empirical covariance $\bm{D}_{\mathrm{ref}}$ is typically much smaller than both the number of SNPs and the training sample size, making the null space of $\bm{D}_{\mathrm{ref}}$ larger than that of $\bm{D}_{\mathrm{train}}$ in general and thereby causing the mismatch condition $\operatorname{null}( \bm{D}_{\mathrm{ref}} ) \not\subset \operatorname{null}(\bm{D}_{\mathrm{train}})$.
		For practical application of Bayesian PRS methods, it is thus essential to address this mismatch problem, whether through the projection approach we propose below or through ad hoc alternatives.
		We expect degrees of mismatch between the reference and training data, such as the angle between $\bsumStats$ and $\operatorname{col}( \bm{D}_{\mathrm{ref}} )$, to affect relative effectiveness of these mismatch remedies and the resulting PRS methods' practical performances.
		Further exploration of these issues, beyond our theoretical result above, is a worthy topic for future work.
	
	\subsection{Real Data Demonstrations of Danger from Data Mismatch}
	\label{sec:ill-behavior_demo}
	
	In Section~\ref{ill-behavior}, we have theoretically characterized the ill-behavior arising from the use of the approximate likelihood in the presence of a mismatch between LD reference and GWAS training samples.
	We now demonstrate, using a real dataset and existing Bayesian PRS software, that this issue has a real consequence in application.

	Our demonstration uses the PRS-CS software by \citet{ge2019polygenic}, the model behind which we discuss in more detail in Section~\ref{existingmethods} below.
	We obtain GWAS summary statistics for BMI from approximately 290K unrelated European individuals in UK Biobank, and the LD matrix estimated from 503 European individuals in 1000G from the PRS-CS website.
	We then run PRS-CS on this dataset, using the software as is except for a one-line code modification to remove their ad hoc fix for the data mismatch issue, as we explain below.
		
	The prior used in PRS-CS is heavy-tailed and, by Theorem~\ref{thm2}, yields an improper posterior under the data mismatch.
	The resulting breakdown of posterior inference, though \textit{not} its root cause, has likely been known to the PRS-CS authors as they propose to impose a constraint $\tau^2 \lambda_j^2 \leq \priorSdBd^2$ on the prior variance of $\beta_j$ for some fixed $\priorSdBd$ \citep{ge2019polygenic}.
	They explain this constraint as necessary to deal with co-linearity among SNPs, but this logic is faulty:
	while problematic in its own way, co-linearity will not cause an improper posterior under a proper prior.
	Figure~\ref{fig:PRScs_traceplot_new} shows traceplots of the regression coefficients when the constraint is removed.
	Since the Gibbs sampler targets an improper posterior, the coefficients start exploding at some point, with values exceeding $10^{20}$ in magnitudes.
	When run longer, the software breaks down before long due to numerical errors.
	
	While the constraint $\tau^2 \lambda_j^2 \leq \priorSdBd^2$ does ensure posterior propriety, this ad hoc solution only indirectly addresses the fundamental problem of the ill-defined approximate likelihood \eqref{likelihood_sum}.
	One practical concern is the posterior inference's sensitivity to the choice of the bound $\priorSdBd$ since, as our theory tells us, this constraint is the only thing preventing the posterior from becoming improper.
	When using this thresholding approach, therefore, $\priorSdBd$ should really be considered as a potential tuning parameter, rather than assuming $\priorSdBd = 1/\sqrt{\Ntrain}$ proposed casually as default by \citet{ge2019polygenic} to work well in all applications.
	
	\begin{figure}[htb]
		\centering
			\includegraphics[width=1\textwidth]{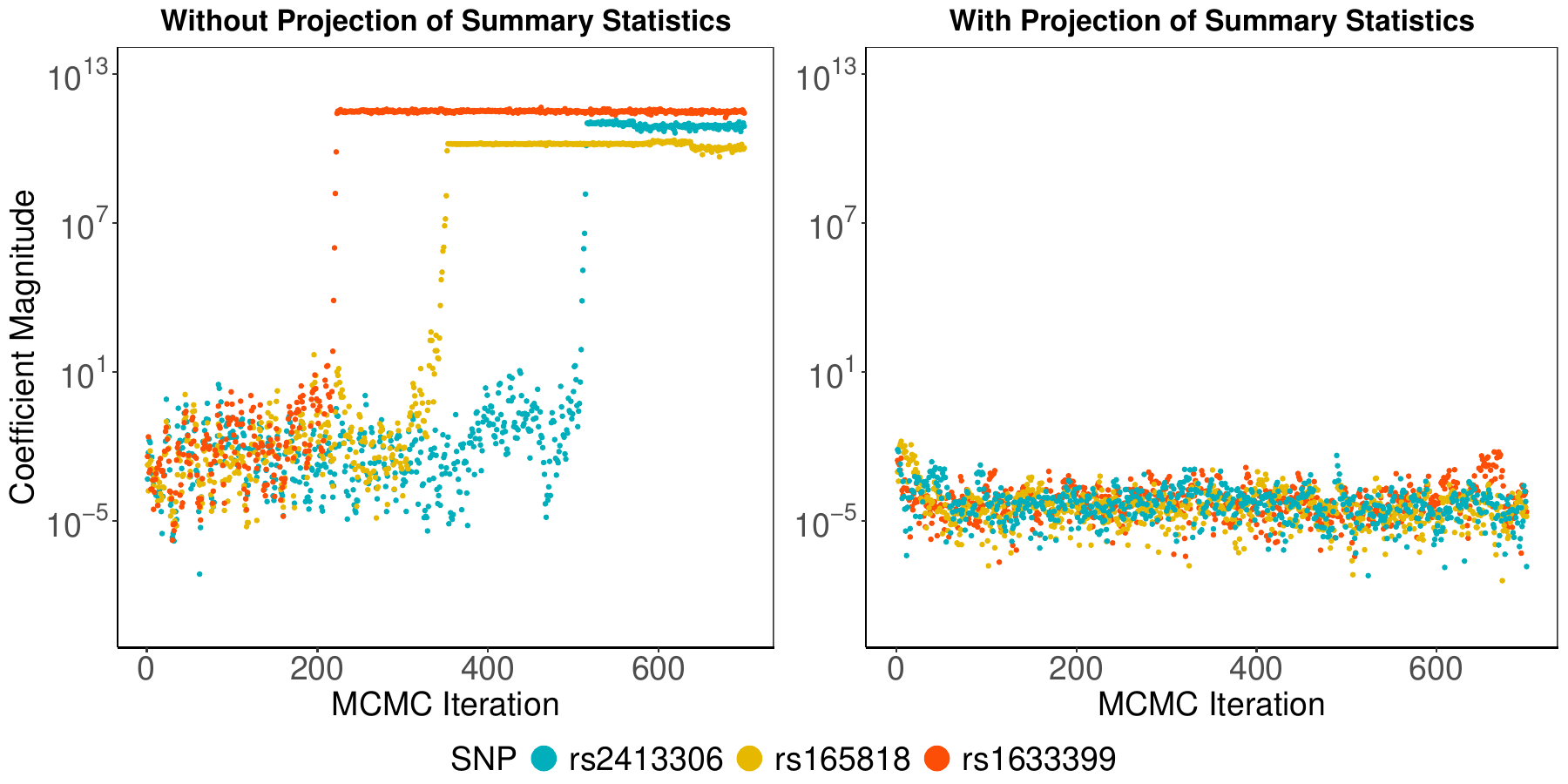}
			
		\caption{%
			Traceplots of the samples, for the first three coefficients to explode, generated by PRS-CS when removing the ad hoc constraint on the prior variance.
			The summary statistics is taken from UK Biobank and the LD matrix from 1000G.
			The data mismatch results in impropriety of the joint nominal posterior and in the explosion of the Gibbs sampler. 
			The software breaks down after a while due to numerical errors.
			The use of projected summary statistics ensures proper posterior inference.
		}
		\label{fig:PRScs_traceplot_new}
	\end{figure}

	\subsection{Projected GWAS Summary Statistics}
	\label{lowRankApprox}
	Having demonstrated the posterior ill-behavior caused by data mismatch, we turn to providing a statistically principled solution to the problem.

	The discussion and theoretical result of Section~\ref{ill-behavior} characterize the ill-behavior as caused by the summary statistics lying outside the column space of the reference LD matrix. 
	This insight motivates the following approach based on a linear projection of summary statistics.
	The idea is to treat the approximate likelihood \eqref{likelihood_sum} as a likelihood not for the raw summary statistics $\bsumStats$ but for its projection $\bm{P}_\textrm{ref} \, \bsumStats$ onto $\colsp(\ldRef)$.
	This projection can be expressed as
	$ \bm{P}_\textrm{ref} \, \bsumStats = \sum_{k=1}^{K} \langle \bsumStats, \bv_k \rangle \bv_k $
	in terms of the eigenvectors $\bv_1,\bv_2,...,\bv_K$ associated with the non-zero eigenvalues of $\ldRef$.
	The projected summary statistics is guaranteed to lie in the support of the approximate likelihood \eqref{likelihood_sum}; its use in place of the raw one, therefore, ensures a proper inference. 
	
	This projection approach can be applied to any PRS methods to avoid the posterior ill-behavior.
	For illustration, we modify PRS-CS to use the projected summary statistics.
	This modified version yields a sound posterior inference without the constraint on the prior variance, as evidenced in the traceplots on the right panel of Figure~\ref{fig:PRScs_traceplot_new}, since the posterior is now guaranteed to be proper.
	
	Our projection approach not only resolves the theoretical issue in deploying the approximate likelihood but also delivers a compelling performance in practice as demonstrated in the numerical study of Section~\ref{Numerical_studies}.
	On the other hand, it is possible that the part of summary statistics discarded by the projection---i.e.\ the component $(\bm{I} - \bm{P}_\textrm{ref}) \, \bsumStats$ orthogonal to $\colsp(\ldRef)$---contain potentially useful information on predicting the genetic risk for the target population, even if it is incompatible with the approximate likelihood \eqref{likelihood_sum} based on the reference LD.
	One possibility, therefore, is to utilize this orthogonal component by specifying its relation to the PRS model coefficients through an auxiliary likelihood function, in addition to the approximate likelihood for $\bm{P}_\textrm{ref} \, \bsumStats$.
	As it turns out, this auxiliary modeling of $(\bm{I} - \bm{P}_\textrm{ref}) \, \bsumStats$ provides a likelihood-based interpretation of ad hoc fixes used by existing PRS methods.
	In particular, it can be shown that 
	1) a special case of the auxiliary modeling yields the same effect on the posterior inference as replacing $\ldRef$ with a regularized version and 2) PRS-CS's constraint  on the prior variance essentially amounts to replacing $\ldRef$ with $\ldRef + \bm{I}$.
	While a more thorough investigation of this auxiliary modeling possibility is left for future work, we explain its connection to the existing methods in Section~\ref{sec:model_null_space} and present some preliminary results in Section~\ref{sec:add_num_result} of the supplement.

	\section{Methods}\label{methods}
	\subsection{
		Benchmark PRS methods: LDpred2, PRS-CS, and Lassosum
	}
	\label{existingmethods}
	Here we review the three representative PRS methods---LDpred2, PRS-CS, and Lassosum---against which we benchmark our PRS-Bridge in Section~\ref{Numerical_studies}.
	LDpred2 and PRS-CS are popular Bayesian methods and thus provide the more direct methodological parallels for evaluating PRS-Bridge. 
	Lassosum is non-Bayesian but is one of the most popular PRS methods and serves as a useful point of reference in assessing our method's performance.
	
	LDpred \citep{vilhjalmsson2015modeling} models the effect size distribution of SNPs with a spike-and-slab prior in the form,
	\begin{equation}
		\beta_j\sim\left\{
		\begin{aligned}
			&\mathcal{N}\left(0,\frac{h^2}{\varpi P}\right) && \text{with probability } \varpi, \\
			&0 &&\text{otherwise},
		\end{aligned}
		\right.
	\end{equation}
	where $P$ denotes the total number of SNPs, $\varpi$ the causal SNP proportion, and $h^2$ the total heritability. 
	Its successor LDpred2 adopts the same model but provides a faster and more robust implementation \citep{prive2018efficient}.
        LDpred2-grid estimates them based on the predictive performance of the corresponding PRS model on a tuning dataset, while
	LDpred2-auto estimates them in a fully Bayesian manner.
	
	
	
	PRS-CS \citep{ge2019polygenic} is another popular Bayesian PRS method. 
	It adopts a global-local shrinkage prior known as the Strawderman-Berger prior:
	\begin{equation}
	\label{eq:strawderman_berger}
	\beta_j\given\tau,\lambda_j\sim \mathcal{N}\left(0,\tau^2\lambda_j^2\right)
	\, \text{ for } \,
	\pi(\lambda_j) = \frac{\lambda_j}{(1+\lambda_j^2)^{\frac{3}{2}}}
	\text{ on } \lambda_j \geq 0,
	\end{equation}
	where global scale $\tau$ controls the overall sparsity level and local scales $\lambda_j$ allow variations in effect size magnitudes.
	The PRS-CS software provides an option to estimates $\Ntrain \tau$ in a fully Bayesian manner with a half-Cauchy prior. 
	The global scale is otherwise treated as a tuning parameter and estimated based on the prediction performance on a tuning dataset.
	
	
	Lassosum \citep{mak2017polygenic} is one of the most well-known among non-Bayesian PRS methods and applies the $\ell_1$-regularized lasso regression to the summary statistics likelihood.
	It additionally uses a regularized version of the LD matrix defined as 
	\begin{equation}
	\label{eq:regularized_ld}
		(1-\zeta) \ldRef + \zeta \bm{I}
	\end{equation}
	where $\zeta \in [0, 1]$ is a tuning parameter;
	the regularized LD matrix has a full rank and avoids the ill-behavior from the data mismatch (Section~\ref{ill-behavior}).
	Both the lasso penalty parameter and $\zeta$ are selected using a tuning dataset.
	Incidentally, the LD matrix regularization of Equation~\eqref{eq:regularized_ld} coincides with one specific way of incorporating the information from $(\bm{I} - \bm{P}_\textrm{ref}) \, \bsumStats$ into the likelihood (Section~\ref{lowRankApprox}).

	\subsection{PRS-Bridge: Robust, Flexible, and Scalable PRS Method}
	\label{PRSBridge}
	
	We now introduce our PRS-Bridge, a new method with accompanying software that combines the summary statistics projection approach of Section~\ref{lowRankApprox} with the flexibility of the bridge prior having a powered exponential distribution of the form
	$$ \beta_j\given\tau\propto \tau^{-1}\exp\left( - \left| \frac{\beta_j}{\tau} \right|^{\alpha} \right ) 
		\ \text{ for } \, \alpha > 0. $$
	Importantly, the exponent parameter $\alpha$ gives the bridge prior an ability to adapt to different degrees of sparsity in the effect sizes and hence to different genetic architectures.
	This flexibility is a major advantage over the other priors previously considered in the PRS literature.
	We treat $\alpha$ as a tuning parameter and select an optimal value based on the performance on a tuning dataset. 
	We find the candidate values of $\alpha \in \{0.125,0.25,0.5\}$ to provide sufficient flexibility for the PRS application.  
	To accommodate situations in which validation data of sufficient size is not available, we also develop a strategy to auto-tune $\alpha$ through an empirical Bayes approach, by maximizing the marginal likelihood through a stochastic approximation \citep{delyon1999stochastic_approx}.
	We provide further details on this auto-tuning version, as well as comparison of its performance against the validation-data-tuning version, in the supplement Section~\ref{PRS-Bridge-auto}.
	
	Figure~\ref{fig:prior} illustrates with contour plots the flexibility offered by the bridge prior's exponent parameter.
	When $\alpha = 1$, the prior coincides with the Laplace distribution and the model recovers the Bayesian lasso \citep{park2008bayesian_lasso}. 
	As $\alpha \rightarrow 0$, the prior becomes more peaked at 0 and heavier-tailed at the same time, inducing an increasingly sparser structure in the SNP effect size distribution.
	
		\begin{figure}[htb]
		\centering
		\begin{subfigure}[b]{0.44\textwidth}
			\includegraphics[width=\textwidth]{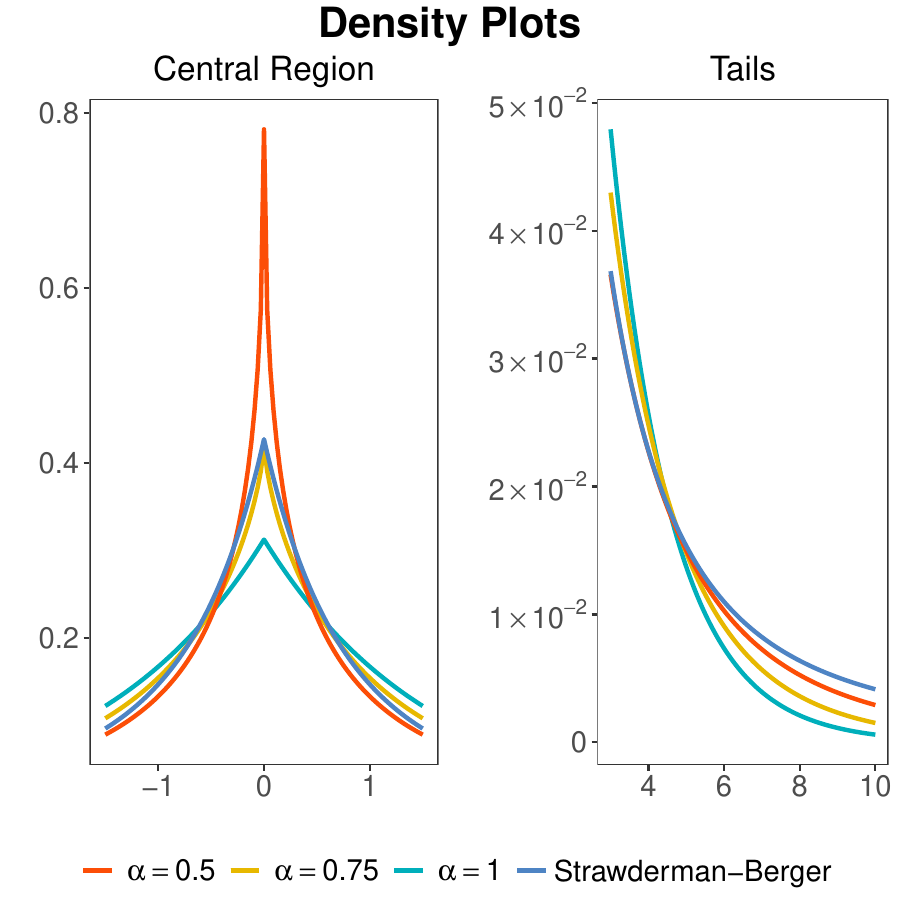}
			\label{fig:density_plot}
		\end{subfigure}
		~
		\begin{subfigure}[b]{0.44\textwidth}
			\includegraphics[width=\textwidth]{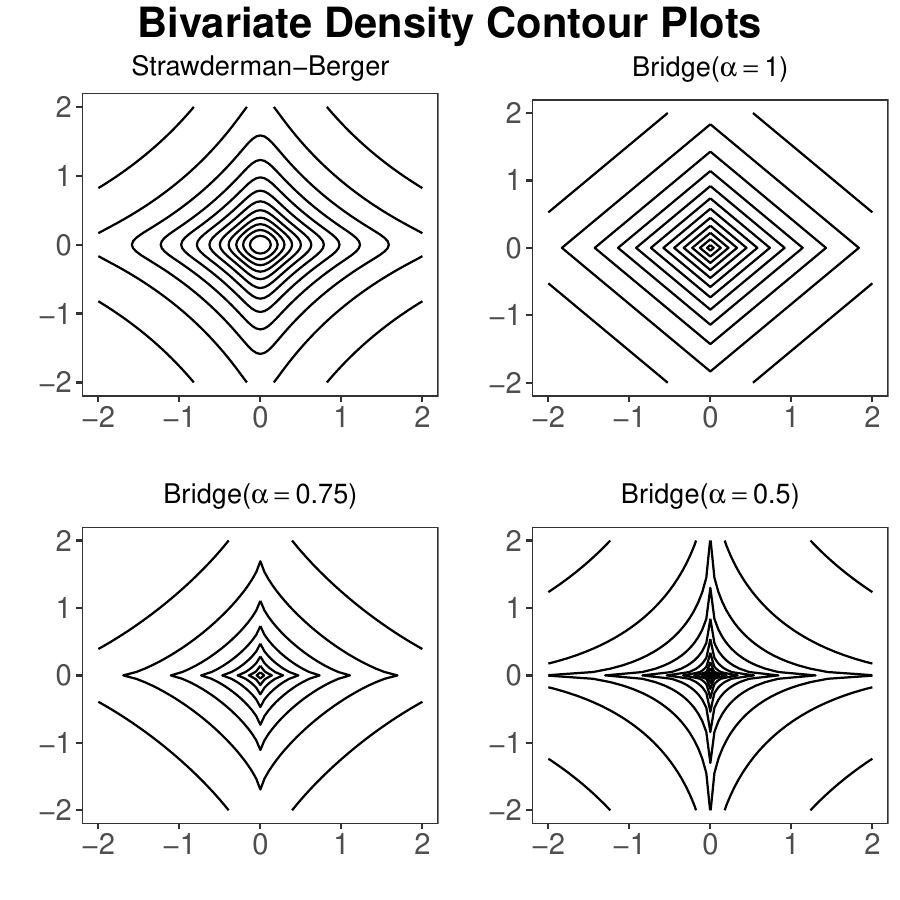}
			\label{fig:contour_plot}
		\end{subfigure}
		\vspace*{-\baselineskip}
		\caption{%
			Comparison of the Strawderman-Berger prior used in PRS-CS and the bridge prior under varying $\alpha$ values.
			In the left plot, the different priors are scaled to have the same amount of probability in the region $[-2, 2]$ to facilitate the comparison.
		}
		\label{fig:prior}
	\end{figure}
	
	
	The bridge prior also has a major computational advantage over other continuous shrinkage priors due to the availability of a collapsed update of the global scale parameter $\tau$ within the Gibbs sampler (\cite{polson2014bayesian}). 
	\cite{polson2014bayesian} show that this collapsed Gibbs sampler generates samples with little autocorrelation even when the posterior computation under other shrinkage priors suffers from poor mixing.
	This efficiency gain is particularly valuable given the high computational demands of the PRS application.
		
	For further scalability, our implementation of the PRS-Bridge method deploys the conjugate gradient (CG) sampler of \citet{aki2022prior} within the collapsed Gibbs sampler.
	The CG sampler speeds up the computation by transforming the task of sampling from a high-dimensional multivariate Gaussian into that of solving a deterministic linear system, to which the CG algorithm can be applied.
	For the PRS posterior, the main cost of the CG step is the iterative application of the matrix-vector operation $\bm{v} \to \ld \bm{v}$ by the LD matrix.
	This allows us to exploit the structured LD approximations discussed in Section~\ref{sec:snp_choice_and_ld_approx}.
	Further details on the posterior computation under PRS-Bridge is provided in the supplement Section~\ref{sec:CG_descrip}.
	
	
	\subsection{LD Approximation Strategy}
	\label{sec:snp_choice_and_ld_approx}
	Population genetic theory predicts that SNPs far apart in the genome are likely to be independent due to recombinations.
	An empirical estimate of SNP-correlation matrix, also referred to LD matrix, shows expected patterns of sparsity (Supplement Figure~\ref{ld}).
	It is common for PRS methods, therefore, to impose a sparsity structure in approximating the population LD matrix and try taking advantage of it in the computation. 
	Existing methods have explored various sparsity structures, such as banded and block diagonal approximations \citep{vilhjalmsson2015modeling,mak2017polygenic,ge2019polygenic,ldpred2}. 
	The block diagonal approximation, for example, allows each corresponding block of regression coefficients to be updated independently.
	
	The projection of the summary statistics proposed in Section~\ref{lowRankApprox} suggests an approach to approximate the LD matrix through a low-rank structure, which can be combined with the block diagonal approximation.
	Specifically, we selectively keep $K^*\leq K$ largest eigenvalues of LD matrix and disregard relatively smaller eigenvalues.
	The resulting low-rank LD matrix is given as $\ldRef=\sum_{k=1}^{K^*}\lambda_k\bv_k\bv_k^T$, where $\lambda_k$ is the eigenvalues of LD matrix corresponding to eigenvector $\bv_k$.
	This LD approximation is combined with the projection of summary statistics onto the space spanned by the eigenvectors $\bv_1, \ldots, \bv_{K^*}$.
	This low-rank approximation can in particular speed up the posterior computation based on the CG sampler.
	In our implementation of PRS-Bridge, we treat the percentage of eigenvectors projected away as a tuning parameter and find that low-rank approximation improves not only computational but also statistical efficiency.

	The impacts of LD approximation strategies on PRS methods' statistical performance have been understudied. 
	In our numerical study, we investigate impacts of using different LD approximation strategies in each PRS methods.
	 PRS-CS implements a block approximation of the LD matrix by partitioning the whole genome into nearly independent LD blocks of fixed sizes. 
	 For PRS-Bridge, we implement the same block approximation strategies but with additional flexibility to use larger LD block sizes. 
	 LDpred2 offers an option to use either the larger block approximation or banded approximation that only accounts for correlation within a neighboring LD region of each SNP. 
	Details of the choice of LD reference data and sparse LD structures are provided in the supplement Section~\ref{sec:LD_approx}.

	\section{Numerical Studies} \label{Numerical_studies}
	
	We first present a synthetic data simulation study in Section~\ref{simulation}, focusing on the three Bayesian PRS methods to assess how the prior choice affects the methods' performances under varying genetic architectures.
	Section~\ref{Real_data_continuous} and \ref{Real_data_binary} present real data studies with continuous and binary traits respectively, where we additionally include Lassosum in the comparison of predictive performances.
	Section~\ref{sec:additional_num_results} describes the additional benchmark studies, conducted to investigate precise sources of the observed performance differences, with their results presented in the supplement Section~\ref{sec:add_num_result}.

	
	We compare the methods' out-of-sample predictive performance as measured by the coefficient of determination $R^2$ for the continuous traits and by the transformed AUC for the binary traits. 
	The latter is defined as $2\{\Phi^{-1}(\textsc{auc})\}^2$, where $\Phi$ denotes the standard Gaussian cumulative distribution function, and allows us to directly translate the observed improvement in predictive accuracy into concrete gain in the form of reduced sample size requirement \citep{pharoah2002polygenic, chatterjee2013projecting}.
	For instance, a 10\% improvement in the transformed AUC means the method requires 10\% fewer samples to achieve the same predictive accuracy.
	The reduced sample size requirement has major implications for rare diseases and minority populations on which data may be limited.

	We quantify uncertainty in the predictive performance by repeating the out-of-sample prediction task under 100 different random splits of the individual-level data into tuning and validation datasets;
	we take the summary-level data used for the model training as fixed, following the standard practice \citep{ge2019polygenic}, since summary-level data in practice cannot be resampled.  
	We report the average performance over the 100 replications, along with $\pm 1.96$ times the standard deviation as an estimate of uncertainty.

	\subsection{Plasmode Synthetic Data from Spike-and-slab Model}
	\label{simulation}
	We first compare performances of LDpred2, PRS-CS, and PRS-Bridge on the synthetic large-scale datasets of \cite{Zhang2023new};
	these are ``plasmode'' datasets \citep{gadbury2008plasmode}, mimicking essential features of real data such LD structures and SNP proportions to allow for realistic evaluation.
	The genotype data of 120,000 individuals are simulated, using HAPGEN2 version 2.1.2 \citep{su2011hapgen2}, to mimic the 1000G reference data containing 498 unrelated individuals of European ancestry \citep{siva20081000}. 
	Trait values for the individuals are simulated from a linear model with SNP effect sizes generated from a spike-and-slab distribution.
	This effect size distribution coincides with the assumption of LDpred2, tilting the simulation study in its favor.
	
	We design the synthetic datesets to cover a range of genetic architectures by varying proportions of non-null SNPs and effect sizes' relationship to allele frequencies as results of negative selection. 
	We use a random subset of 100,000 individuals to generate the GWAS summary statistics for training, 10,000 for parameter tuning, and 10,000 for calculating out-of-sample prediction performance.
	The LD matrix is estimated from the 1000G reference data.
	Details of the method implementation are summarized in the supplement Section~\ref{sec:method_implement}.
	
	Figure~\ref{fig:simulation} compares the out-of-sample prediction performance of the three PRS methods under the variety of genetic architectures.
	Unsurprisingly, since the datasets are simulated under the spike-and-slab model, correctly specified LDPred2 performs the best.
	On the other hand, PRS-Bridge's performance is remarkably close despite its misspecification of the effect size distribution.
	In particular, PRS-Bridge outperforms PRS-CS consistently and by substantial margins.
	The flexibility provided by the bridge prior's exponent parameter $\alpha$ appears to play a major role in its superior performance.
	In fact, as the true causal SNP proportions decrease, so does the optimal $\alpha$ selected during the tuning process reflecting the role of $\alpha$ in controlling the prior sparsity level.

	\begin{figure}[htb]
		\centering
		\includegraphics[width=12cm,trim=0 1ex 0 .7ex,clip]{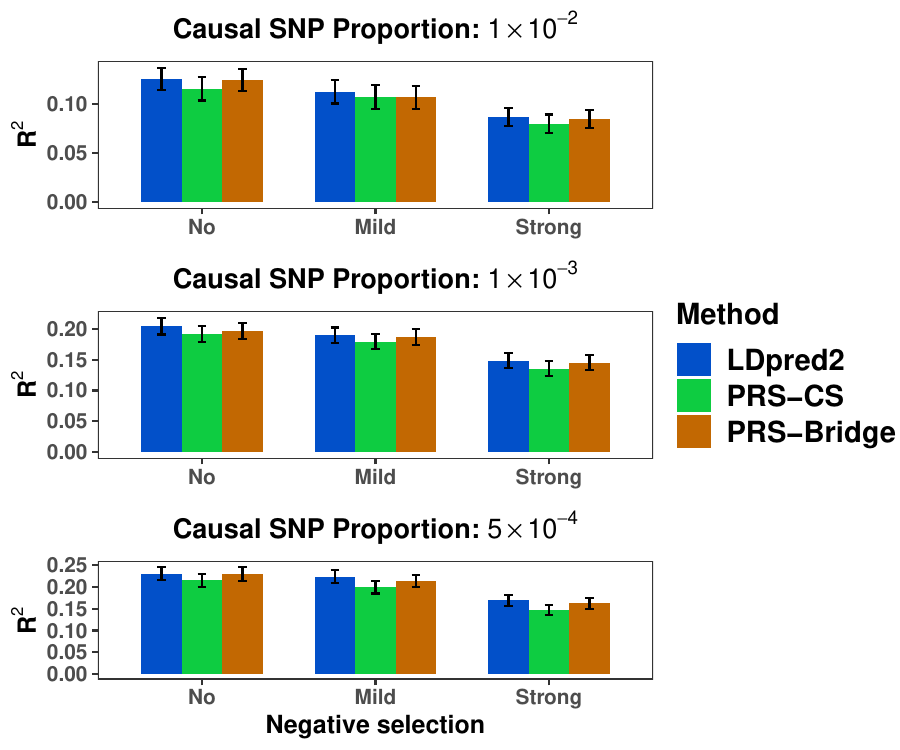}
		\caption{%
		Comparison of out-of-sample prediction performances by LDpred2, PRS-CS, and PRS-Bridge on the plasmode synthetic datasets.
		We report the average as well as 1.96 times the standard error of $R^2$ across the 100 replications. 
		The causal SNP proportions are varied from 0.01, 0.001, to 0.0005.
		The effect sizes of causal variants are assumed to be related to allele frequency under a model with no, mild, or strong negative selection. 
		The LDpred2 software uses a banded LD structure with the default LD radius of 3cM. 
		The PRS-CS and PRS-Bridge implementations use the block-diagonal LD approximation, with PRS-Bridge additionally using low-rank approximation.}
		\label{fig:simulation}
	\end{figure}
	
	\subsection{Real Data Benchmark on Continuous Traits from UK Biobank}\label{Real_data_continuous}
	We now apply the three Bayesian PRS methods and additionally Lassosum to data from UK Biobank to develop PRS for six continuous human traits: BMI, resting heart rate, high-density and low-density lipoprotein cholesterol, apolipoprotein A1, and apolipoprotein B.
	We take 319,342 unrelated European individuals from UK Biobank and use 90/5/5\% of the data as: a training set of 287,285 individuals to obtain summary data; a tuning set of 16,028 individuals to select tuning parameters; and a validation set of 16,029 individuals to calculate out-of-sample prediction $R^2$.
	We adjust for age, gender, and the top 10 genetic principal components besides genotype.
	To assess how the choice of external reference data affects PRS methods' performance, we implement each method using two different data sources, UK Biobank or 1000G, for estimating the LD matrix.

	Figure~\ref{fig:UKBiobank} summarizes the benchmark results on continuous traits.
	PRS-Bridge shows the best overall performance under the appropriately chosen block sizes: the larger blocks when using the larger reference data from UK Biobank and the smaller blocks when using the smaller reference data from 1000G.
	Under the UK Biobank reference, the large-block PRS-Bridge outperforms the default-setting PRS-CS with an average $R^2$ increase of 12.22\%, the banded LDpred2 by 2.47\%, the large-block LDpred2 by 2.72\%, and the default-setting Lassosum by an average of 14.55\% in $R^2$. 
	Under the 1000G reference, the small-block PRS-Bridge exhibits similar performance to PRS-CS and Lassosum, and outperforms the banded LDpred2 with an average $R^2$ increase of 18.5\%.
	We further compare the computational efficiency of these methods in the supplement Section~\ref{sec:speed}.

	\begin{figure}[htb]
		\centering
		\includegraphics[width=15cm]{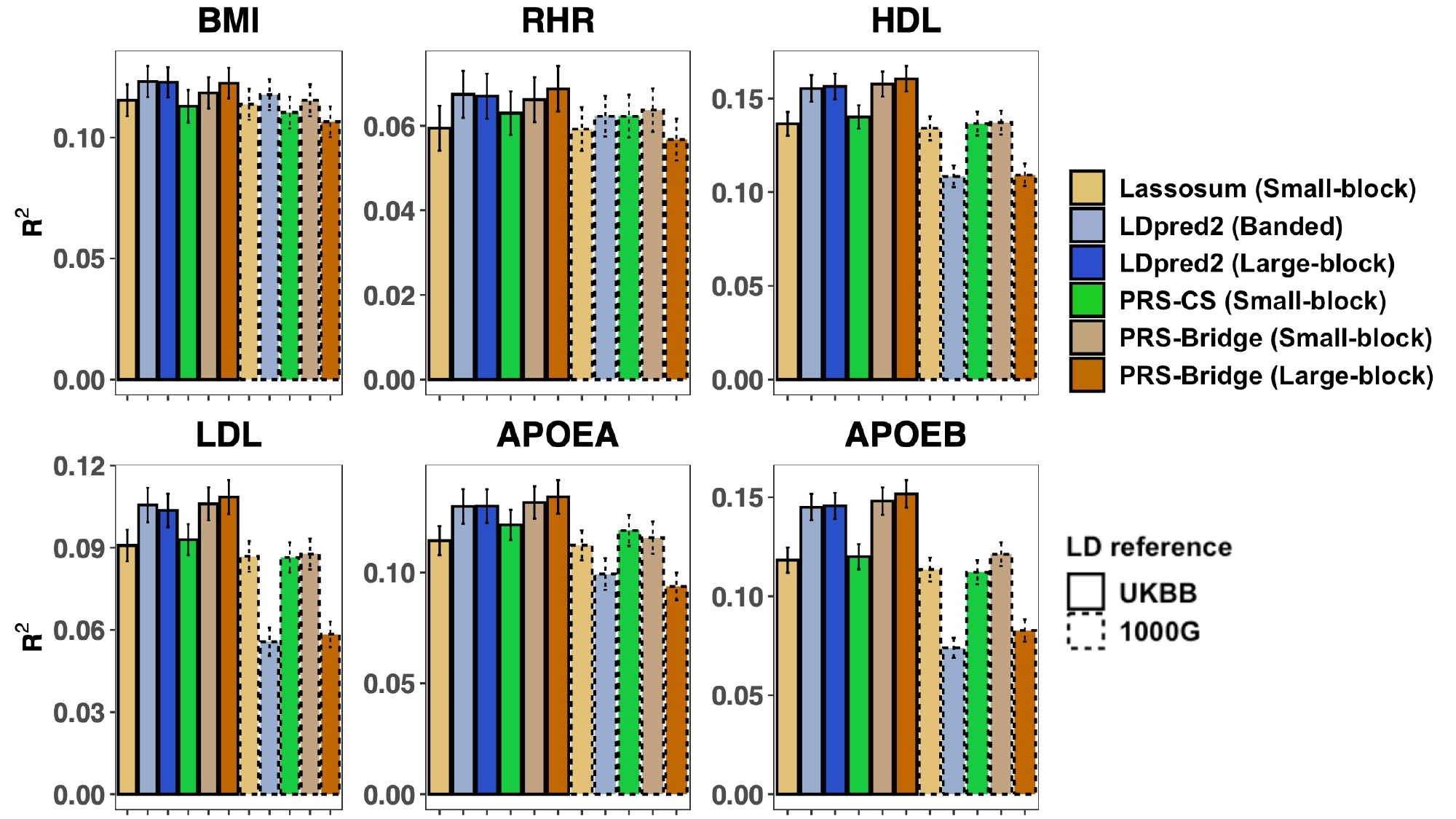}
		\caption{Out-of-sample prediction $R^2$ of Lassosum, LDpred2, PRS-CS, and PRS-Bridge on the six continuous traits: BMI, resting heart rate (RHR), high-density lipoprotein cholesterol (HDL), low-density lipoprotein cholesterol (LDL), apolipoprotein A1 (APOEA), and apolipoprotein B (APOEB). 
		We implement each method with two alternative LD reference data sources: 1000G and UK Biobank.
		``(Banded)'' indicates the use of the banded structure in approximating LD, while ``(Small-block)'' and ``(Large-block)'' indicate the use of the block structure with the small and large blocks.
		For Lassosum, LDpred2 and PRS-CS, we only consider the default LD structures in their software.
		The error bars represent 1.96 times the standard error of $R^2$ across the 100 replications. 
		Since the errors are correlated, overlaps in error bars should not be interpreted as implying the lack of statistically meaningful differences;
		Figure~\ref{fig:UKBiobank_RE} in the supplement Section~\ref{sec:relative_performance} shows $R^2$ of each method relative to Lassosum and indicates a clear trend in their relative performances that remains consistent across the replications.
		}
		\label{fig:UKBiobank}
	\end{figure}
	\FloatBarrier
	
	When using 1000G as reference, the prediction power of all methods decreases compared to when using UK Biobank.
	This is unsurprising since 1000G provides far fewer samples than UK Biobank ($N\approx500$ vs. $N \approx 330K$) for estimating LD.
	Also, the LD structure of the 1000G population may not entirely align with that of the UK Biobank population used for training and validation, contributing to the lower prediction accuracy.
	PRS-Bridge, PRS-CS, and Lassosum nonetheless show performance more robust to the choice of LD reference samples than LDpred2.
	Our results also show that the smaller LD reference sample size warrants more regularization in estimation of the LD matrix by using a smaller block size.
	When using UK Biobank as reference, the large-block PRS-Bridge on average yields 2.54\% improvement in prediction over the smaller-block one.
	On the other hand, when using 1000G as reference, the small-block PRS-Bridge outperforms the larger-block one, with 20.54\% improvement on average.

	\subsection{Real Data Benchmark on Binary Traits using Summary Statistics from External Sources}
	\label{Real_data_binary}
	Here we benchmark the PRS methods on five disease traits: breast cancer, coronary artery disease, depression, rheumatoid arthritis, and inflammatory bowel disease. 
	For binary disease outcomes, even the largest biobanks such as UK Biobank lack an adequate number of cases and sufficient statistical power. 
	For this reason, we train the models on publicly available GWAS summary data from external sources and evaluate their predictive performances on individual-level validation samples from UK Biobank. 
	Table~\ref{tab:GWAS_disease} lists the sources of summary data, along with sample sizes of the training, tuning, and validation datasets.
	
	
	Figure~\ref{fig:Disease} summarizes the results of our benchmark study on binary disease traits.
	The results here show the same patterns as in our study on continuous traits:
	PRS-Bridge generally outperforms Lassosum, LDpred2 and PRS-CS;
	the larger LD reference sample size leads to better performances;
	and PRS-Bridge, PRS-CS, and Lassosum are more robust than LDpred2 to the choice of reference data.
	PRS-Bridge's predictive ability for inflammatory bowel disease is particularly remarkable.
	When using UK Biobank as reference, the larger-block PRS-Bridge delivers 25.2\% improvement in predicting inflammatory bowel disease over the best performing LDpred2, 27.27\% improvement over PRS-CS, and 10.2\% improvement over Lassosum.

\begin{figure}[ht]
	\centering
	\includegraphics[width=15cm]{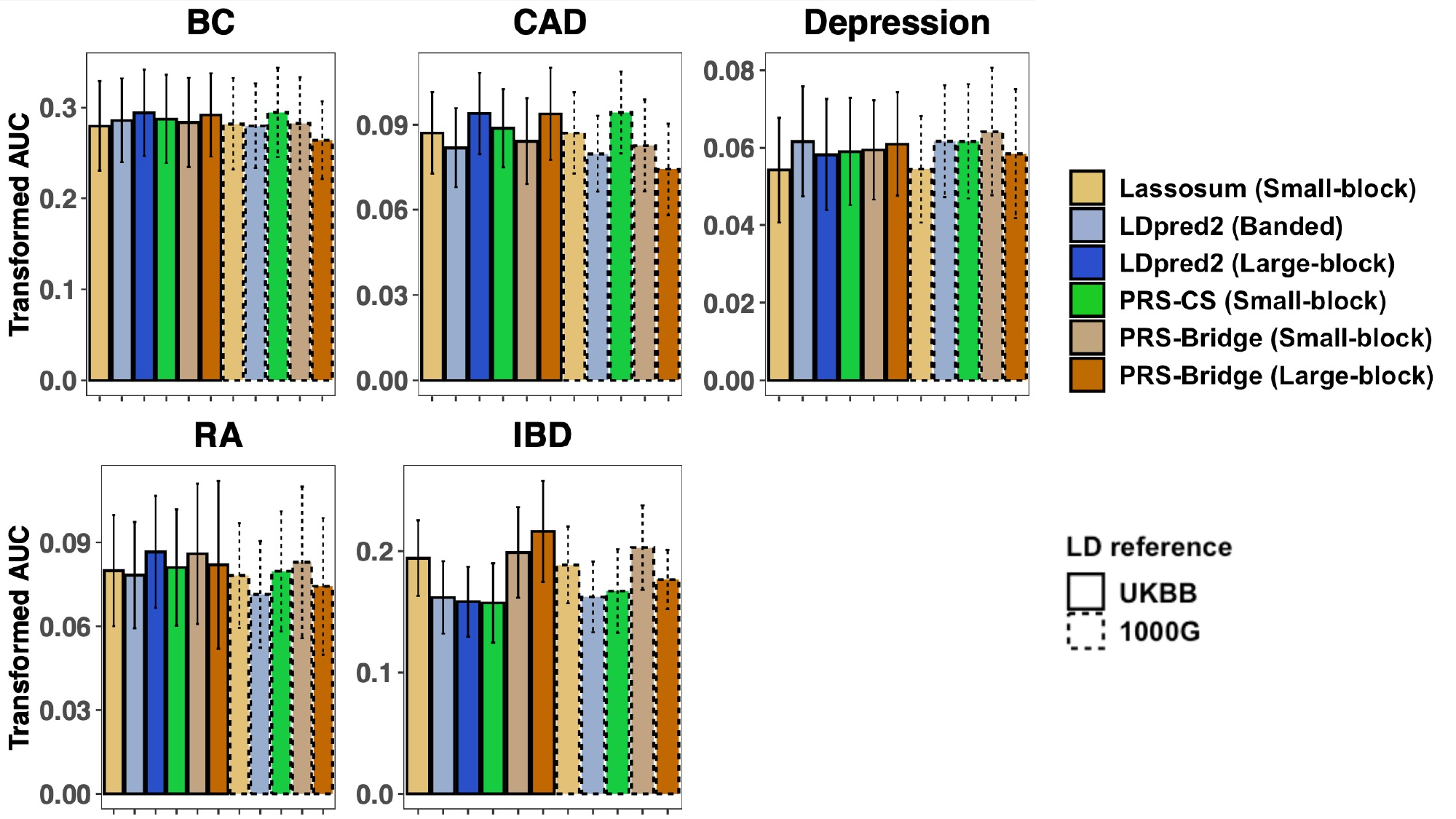}
	\caption{Out-of-sample prediction performances of Lassosum, LDpred2, PRS-CS, and PRS-Bridge on the five binary disease traits: breast cancer (BC), coronary artery disease (CAD), depression, rheumatoid arthritis (RA), and inflammatory bowel disease (IBD).
	The methods are implemented in the same manners as described in the caption of Figure~\ref{fig:UKBiobank}.
	The error bars should again be interpreted with caution; Figure~\ref{fig:disease_RE} in the supplement Section~\ref{sec:relative_performance} provides the transformed AUC of each method relative to Lassosum.}
	\label{fig:Disease}
\end{figure}
\FloatBarrier

\subsection{Additional Benchmark Results}
\label{sec:additional_num_results}

	We conduct additional benchmark studies to shed lights on precise sources of the differences in performance across the methods. 
	We defer the results to the supplement Section~\ref{sec:add_num_result} but summarize each study and its findings here.
	
	We investigate, in the supplement Section~\ref{sec:PRS-Bridge-tuning}, which exponent parameter maximizes PRS-Bridge's predictive performance for each trait;
	the optimal value provides an insight on the trait's genetic architecture.
	We also assess how PRS-Bridge's performances change as we vary the proportion of eigenvectors projected away from the LD matrix.
	Additionally, we investigate whether PRS-CS can provide a similar adaptability to different genetic architectures, using the two other choices of priors explored by \cite{ge2019polygenic}.
	We find that this additional tuning makes relatively little difference in PRS-CS's performance, with the default prior having the most consistent performance.
	
	In the supplement Section~\ref{sec:PRS-CS-proj-binary}, we assess how PRS-CS performs when replacing its ad hoc constraint on the prior variance with our summary statistics projection. 
	This study demonstrates our projection approach to be applicable beyond PRS-Bridge.
	Also, comparing PRS-CS with projection against PRS-Bridge and the original PRS-CS allows us to separate the effect of the prior choice from that of the projection.
	We find PRS-CS with projection performs comparably to the original PRS-CS in some cases but slightly worse in other cases.
	This shows that, besides the proper inference achieved by the projection, the prior choice does play a major role in PRS-Bridge's superior performance.
	
	We also benchmark PRS-CS with projection against PRS-CS with modifications of the ad hoc constraint $\tau^2 \lambda_j^2 \leq \priorSdBd^2$, where we vary $\priorSdBd = 10^k/\sqrt{\Ntrain}$ for $k=-2,-1,0 \text{ (original)},\allowbreak 1, 2$ (Supplement Section~\ref{sec:PRS-CS-proj}). 
	The results confirm that the performance of PRS-CS with ad hoc constraint is sensitive to the choice of $\priorSdBd$, while our projection approach yields more consistent and competitive performances.
	We additionally explore the potential use of $(\bm{I} - \bm{P}_\textrm{ref}) \, \bsumStats$, the part of the summary statistics orthogonal to reference LD, as discussed in Section~\ref{lowRankApprox}.
	Specifically, we find that performance of PRS-CS with projection is improved when we incorporate the information from the orthogonal component; 
	this can be done in a manner that emulates the effect of the ad hoc constraint based on $\priorSdBd = 1 / \Ntrain$, which we confirm to yield comparable performances to the original PRS-CS.
	
	Finally, in the supplement Section~\ref{sec:PRS-Bridge-auto-numerical}, we evaluate performances of the tuning-free, auto version of PRS-Bridge as mentioned in Section~\ref{PRSBridge}.
	Unsurprisingly, the auto version lags behind the tuned version but only by small margins for continuous traits, where data provide more information.

\section{Discussion}\label{discussion}
This work advances the Bayesian PRS methodology on multiple fronts.
We have uncovered a previously overlooked pitfall in the widely used approach based on the approximate likelihood and proposed a principled remedy via the projection of GWAS summary statistics. 
While we have focused on applications involving PRS development, the issue is also relevant for Bayesian methods for other applications like fine-mapping of causal variants within genomic regions \citep{wang2020simple,yang2023carma}.
We have further demonstrated, through an extensive numerical study, an opportunity to improve PRS performance by more carefully considering the choice of a prior on SNP effect size distribution, of reference data for estimating LD, and of a structured approximation of LD.

Our proposed PRS-Bridge delivers consistent and superior performance thanks to its use of the projected summary statistics and the bridge prior's flexibility to adapt to varying effect size distributions. 
We choose the bridge prior for its computational advantage but other flexible priors, such as the normal-gamma priors \citep{brown2010inference}, can in principle provide similar benefits in terms of predictive performance.
In our real data analyses, PRS-Bridge demonstrates a meaningful increase in predictive power over LDPred2, PRS-CS, and Lassosum for a number of clinically relevant traits such as inflammatory bowel disease. 
Incidentally, we note that our PRS-Bridge is distinct from the similarly named BridgePRS of \cite{hoggart2024bridgeprs}.
BridgePRS is so called because it ``bridges'' multiple populations for cross-ancestry PRS construction, but the method actually relies on a Gaussian prior and \textit{not} on the bridge prior.

We have also shown that 
the projection approach can be extended to incorporate the part of the summary statistics orthogonal to the column space of reference LD (Section~\ref{lowRankApprox} and Supplement Section~\ref{sec:model_null_space}). 
This potential extension is relevant because our empirical findings of Section~\ref{Numerical_studies} and of the supplement Section~\ref{sec:add_num_result} indicate that the part discarded by the projection indeed contains relevant information on the target population, with the amount of the discarded information being more substantial when the reference data source is limited in its sample size and when its underlying population differs from the target.
The connection between the auxiliary modeling of the orthogonal component and the LD matrix regularization further highlights the importance of how to select the reference data source and how to resolve its discrepancy with the training data.

As PRS methodology moves in the direction of joint modeling across diverse ancestry groups and related traits, we expect the use of a flexible prior to become even more important in capturing a range of multivariate SNP effect size distributions. 
Another direction is the choice of prior that will allow flexible incorporation of functional annotation information on genetic variants derived from external data.
Bayesian approaches to PRS have much potential, and further advances in the methodology can help realize the full potential of genomic medicine.

Looking ahead, other types of high-dimensional biomarkers—such as proteomic profiles—are expected to play an increasingly important role in enhancing risk prediction for complex diseases. 
The use of summary-statistics data is similarly applicable to these settings and can provide larger sample sizes for model development. 
However, the architectures of proteomic effects may differ substantially from the polygenic patterns observed in GWAS, which could make the ad hoc techniques developed and fined tuned for GWAS less successful in these applications. 
A flexible and principled approach like PRSBridge, on the other hand, is likely to generalize more straightforwardly to other future applications of summary-level data.

\section*{Data and Code Availability}
PRS-Bridge is freely available as a Python-based command line tool from a GitHub repository at \url{https://github.com/YuzhengDun1999/PRSBridge}.
Code and instructions to reproduce the numerical results is available at \url{https://github.com/YuzhengDun1999/PRS-Bridge-article-code}.







\section*{Supplementary material}
The Supplementary Material presents the followings:
proofs of the theoretical results and their real data demonstrations; 
the extension of the projection approach as mentioned in Section~\ref{lowRankApprox};
the auto version of PRS-Bridge;
details of the PRS methods' implementations and of the data quality control;
a comparison of the methods' computational speeds; 
and the additional benchmark studies as summarized in Section~\ref{sec:additional_num_results}.

\section*{Acknowledgments}
{\sloppy
Research reported in this publication was supported by 
the National Human Genome Research Institute, 
the National Cancer Institute,
and the National Institute of General Medical Sciences
of the National Institutes of Health under Award Numbers 
R01HG010480 (N.C.), U01CA249866 (N.C.),
R00HG012223 (J.J.), R35GM157133 (J.J.), and
R35GM160458 (A.N.).
The content is solely the responsibility of the authors and does not necessarily represent the official views of the National Institutes of Health.
\par}

Individual-level genotype and phenotype data of the UK Biobank samples were obtained under application 17731.

\bibliographystyle{agsm}
\bibliography{arxiv}

\newpage
\renewcommand{\thetheorem}{S\arabic{theorem}} 
\renewcommand{\thelemma}{S\arabic{lemma}} 
\theoremstyle{definition}
\renewcommand{\thedefinition}{S\arabic{definition}} 
\renewcommand{\theassumption}{S\arabic{assumption}} 
\renewcommand{\theequation}{S.\arabic{equation}}
\renewcommand{\thesection}{S\arabic{section}}
\renewcommand{\thetable}{S\arabic{table}}
\renewcommand{\thefigure}{S\arabic{figure}}
\renewcommand{\theequation}{S\arabic{equation}}
\renewcommand{\thepage}{S\arabic{page}}
\setcounter{section}{0}
\setcounter{page}{1}
\setcounter{equation}{0}
\setcounter{figure}{0}
\setcounter{theorem}{0}
\setcounter{table}{0}

\bigskip
\begin{center}
	{\huge\bf Supplementary Material}
\end{center}
\section{Proof of Theorem~\ref{thm2}}
\label{proof2}

To state in a mathematically precise manner Theorem~\ref{thm2} in the main text and Theorem~\ref{thm1} in Section~\ref{proof1} below, we need to make a technical assumption.
Assumption~\ref{continuous} below ensures that a summary statistics $\bsumStats = \Ntrain^{-1} \bm{X}_\mathrm{train}^T \bm{y}_\mathrm{train} $ almost surely has non-zero component in every direction of $\colsp(\ld_{\mathrm{train}}) = \colsp(\bm{X}_{\mathrm{train}}^T) $;
i.e.\ for any fixed non-zero vector $\bm{v} \in \colsp(\ld_{\mathrm{train}})$,  we have $\langle \bsumStats, \bm{v} \rangle \neq 0$ with probability one.
\begin{assumption}\label{continuous}
	The probability distribution $\bm{y}_\mathrm{train} \given \bm{X}_\mathrm{train}$ underlying the individual-level data is absolutely continuous with respect to the Lebesgue measure in $\mathbb{R}^N$.
\end{assumption}

A heavy-tailed distribution can intuitively thought of as a distribution whose density has tails that decay slower than exponentials.
More formally, it is defined as follows:
\begin{definition}[Heavy-tailed distribution]\label{heavy_tailed_def}
	A distribution with density $\pi(\cdot)$ and cumulative distribution $F(\cdot)$ is said to be \textit{heavy-tailed} if $\int_{-\infty}^{\infty}\pi(x)\exp(tx)dx=\infty$ for any $t > 0$ or, equivalently, if $\lim_{x\rightarrow\infty}\exp(tx)(1- F(x)) = \infty$.
\end{definition}

We first establish three lemmas to be use for our proof of Theorem~\ref{thm2}:
\begin{lemma}\label{lemma1}
	Suppose $\pi(\beta)$ admits a scale mixture representation $\pi\left(\beta\given\priorSD^2\right)\sim\mathcal{N}(0, \priorSD^2)$ with $\priorSD^2 \sim\pi(\priorSD^2)$. 
	Then $\pi(\beta)$ is heavy-tailed if and only if $\pi(\priorSD^2)$ is heavy-tailed.
\end{lemma}
\begin{proof}
	By Tonelli's theorem, we have for any $t > 0$
	\begin{equation}\label{heavy-tailed}
		\begin{split}
			&\hspace*{-2em}\int_{-\infty}^{\infty}\pi\left(\beta\right)\exp\left(t\beta\right)d\beta\\
			&=\int_{\infty}^{\infty}\int_{0}^{\infty}\pi\left(\beta\given\priorSD^2\right)\pi\left(\priorSD^2\right)\exp\left(t\beta\right)d\priorSD^2 d\beta\\
			&=\int_{0}^{\infty}\int_{\infty}^{\infty}\pi\left(\beta\given\priorSD^2\right)\pi\left(\priorSD^2\right)\exp\left(t\beta\right)d\beta d\priorSD^2 \\
			&=\int_{0}^{\infty}\int_{\infty}^{\infty}\left(2\pi\priorSD^2\right)^{-1/2}\pi\left(\priorSD^2\right)\exp\left(-\frac{\beta^2-2\priorSD^2 t\beta}{2\priorSD^2}\right)d\beta d\priorSD^2 \\
			&=\int_{0}^{\infty}\pi\left(\priorSD^2\right)\exp\left(\frac{\priorSD^2 t^2}{2}\right)d\priorSD^2
		\end{split}
	\end{equation}
	Hence $\int_{-\infty}^{\infty}\pi\left(\beta\right)\exp\left(t\beta\right)d\beta=\infty$ if and only if $\int_{0}^{\infty}\pi\left(\priorSD^2\right)\exp\left(\frac{\priorSD^2 t^2}{2}\right)d\priorSD^2=\infty$.
\end{proof}

\begin{lemma}\label{lemma2}
	If a density $\pi\left(x\right)$ on $x \geq 0$ is heavy-tailed, then $\pi'\left(x\right)\propto \left(1 + \eigenval x \right)^{-1/2}\pi\left(x\right)$ is also heavy-tailed for any $\eigenval>0$.
\end{lemma}
\begin{proof}
	Denote the reciprocal of the normalizing constant by $C = \left( \int \pi'(x) dx \right)^{-1}$ and let $t > 0$.
	Observe that 
	\begin{equation}\label{heavy-tailed-transfered}
		\begin{split}
			&\hspace*{-2em}\int_{0}^{\infty}\pi'\left(x\right)\exp\left(tx\right)dx\\
			&= C\int_{0}^{\infty}\left(\eigenval x+1\right)^{-1/2}\pi\left(x\right)\exp\left(tx\right)dx\\
			&= C\int_{0}^{\infty}\pi\left(x\right)\exp\left(\frac{tx}{2}\right)\frac{\exp\left(\frac{tx}{2}\right)}{\sqrt{\eigenval x+1}}dx\\
			& \geq C\int_{L}^{\infty}\pi\left(x\right)\exp\left(\frac{tx}{2}\right)dx.\\
		\end{split}
	\end{equation}
	The last inequality comes from the fact that there exists $L < \infty$ such that $\exp\left(tx/2\right)>\left(\eigenval x+1\right)^{1/2}$ for all $x > L$. 
	Since $\pi\left(x\right)$ is heavy-tailed and
	$\int_{0}^{\infty}\pi\left(x\right)\exp\left(\frac{tx}{2}\right)dx=\infty$, we must also have $\int_{L}^{\infty}\pi\left(x\right)\exp\left(\frac{tx}{2}\right)dx = \infty$.
\end{proof}

\begin{lemma}\label{lemma3}
	Suppose that  $X_1,X_2,...,X_n$ are i.i.d.\ random variables with cumulative distribution $F_X(\cdot)$. 
	Let $Y=\min\left(X_1,X_2,...,X_n\right)$ and denote its cumulative distribution by $F_Y(\cdot)$. 
	If $F_X(\cdot)$ is heavy-tailed, then so is $F_Y(\cdot)$. 
\end{lemma}
\begin{proof}
	Since the distribution of $X_1,X_2,...,X_n$ is heavy-tailed, by definition we have 
	$$\lim_{x\rightarrow\infty}\exp\left(tx\right)\left(1-F_X\left(x\right)\right)=\infty \ \text{ for any } \, t>0.$$
	It follows that, for any $n>0$,
	$$\lim_{x\rightarrow\infty}\exp\left(ntx\right)\left(1-F\left(x\right)\right)^n=\left(\lim_{x\rightarrow\infty}\exp\left(tx\right)\left(1-F\left(x\right)\right)\right)^n=\infty \ \text{ for any } \, t>0.$$
	Since the cumulative density of $Y=\min\left(X_1,...,X_n\right)$ is given by
	$F_Y\left(y\right)=1-\left[1-F_X\left(y\right)\right]^n$, we have for any $s>0$
	$$\lim_{y\rightarrow\infty}\exp\left(sy\right)\left(1-F_Y\left(y\right)\right)
	=\lim_{y\rightarrow\infty}\exp\left(n t y\right)\left[1-F_X\left(y\right)\right]^n=\infty 
	\ \text{ for } \, t = s / n > 0.$$
	As this holds for any $s>0$, the distribution of $Y$ is heavy-tailed.
\end{proof}

We proceed to the proof of Theorem \ref{thm2}:

\begin{proof}[Proof of Theorem \ref{thm2}]
	Consider a prior $\beta_j\given\tau,\priorSD_j\sim\mathcal{N}\left(0,\tau^2\priorSD_j^2\right)$ with $\priorSD^2_j\sim \pi\left(\priorSD^2_j\right)$. Denote $\bLambda=\text{diag}\left(\lambda_1,...,\lambda_P\right)$ and $\bm{\Phi}=\Ntrain\ldRef+\tau^{-2}\bLambda^{-2}$. 
	Setting $\hat{\bbeta}=\bm{0}$ in Equation (\ref{joint_post}), we obtain the following relation for the joint nominal posterior:
	\begin{equation}
		\begin{split}
			&\hspace*{-2.5em}\pi\left(\bbeta,\priorSDs\given \sumStatsTrain,\ldRef,\tau\right)\\
			&\propto \frac{\exp\left(-\frac{\left(\bbeta-\Ntrain\bm{\Phi}^{-1}\sumStatsTrain\right)^T\bm{\Phi}\left(\bbeta-\Ntrain\bm{\Phi}^{-1}\sumStatsTrain\right)}{2}\right)\prod_{j=1}^P\priorSD_j^{-1}\pi\left(\priorSD_j^2\right)}{\exp\left(-\frac{\left(\Ntrain\bm{\Phi}^{-1}\sumStatsTrain\right)^T\bm{\Phi}\left(\Ntrain\bm{\Phi}^{-1}\sumStatsTrain\right)}{2}\right)}\\
			&= \exp\left(-\frac{\bbeta^T\bm{\Phi}\bbeta-2\bbeta^T\Ntrain\sumStatsTrain}{2}\right)\prod_{j=1}^P\priorSD_j^{-1}\pi\left(\priorSD_j^2\right)\\
			&= (2\pi)^{-P/2}\varphi\left(\bbeta\given\bm{\Phi}^{-1}\Ntrain\sumStatsTrain,\bm{\Phi}^{-1}\right) \\
			&\hspace*{3em}\operatorname{det}\left(\bm{\Phi}\right)^{-1/2}\exp\left(\frac{\Ntrain^2\left(\sumStatsTrain\right)^T\bm{\Phi}^{-1}\sumStatsTrain}{2}\right)\prod_{j=1}^P\priorSD_j^{-1}\pi(\priorSD_j^2),
		\end{split}
	\end{equation}
	where $\varphi\left(\bbeta\given\bm{\mu},\bm{\Sigma}\right)$ denotes the normal probability density function with mean $\bm{\mu}$ and variance $\bm{\Sigma}$ evaluated at $\bbeta$.
	In addition, we have
	\begin{equation}\label{Phi_det_inequal}
		\begin{split}
			&\hspace*{-2em}
			\tau^{-1}\text{det}\left(\bm{\Phi}\right)^{-1/2}\prod_{j=1}^P\priorSD_j^{-1} \\
			&=\text{det}\left(\Ntrain\tau^2\bLambda\ldRef\bLambda+\bm{I}\right)^{-1/2}\\
			&= \text{det}\left(\Ntrain\tau^2\bLambda\VRef\eigenvalRef\VRef^T\bLambda+\bm{I}\right)^{-1/2}\\
			&\geq\text{det}\left(\Ntrain\eigenval_{\text{ref,max}}\tau^2\bLambda^2+\bm{I}\right)^{-1/2}\\
			&= \prod_{j=1}^P\left(\Ntrain\tau^2\priorSD_j^2\eigenval_{\text{ref,max}}+1\right)^{-1/2},
		\end{split}
	\end{equation}
	where $\eigenval_{\text{ref,max}}$ is the largest eigenvalue of $\ldRef$;
	the inequality of the determinants follow from the fact 
	$\left(\Ntrain\tau^2\bLambda\VRef\eigenvalRef\VRef^T\bLambda+\bm{I}\right) \succ \left(\Ntrain\eigenval_{\text{ref,max}}\tau^2\bLambda^2+\bm{I}\right)$
	and Weyl's inequality \citep{horn2012matrix-analysis}.
	
	Denote by $\pi'(\priorSD^2_j)$ the density proportional to $\left(\Ntrain\tau^2\priorSD_j^2\eigenval_{\text{ref,max}}+1\right)^{-1/2}\pi\left(\priorSD^2_j\right)$. 
	Also denote $\priorSDmin^2=\min\left(\priorSD_1^2,...,\priorSD^2_P\right)$.  
	Now observe that
	\begin{equation}\label{joint_inequality}
		\begin{split}
			&\hspace*{-2em} \int\pi\left(\bbeta,\priorSDs\given \sumStatsTrain,\ldRef,\tau\right)d\bbeta d\priorSD^2_1...d\priorSD^2_P\\
			&= C_1\int\text{det}\left(\bm{\Phi}\right)^{-1/2}\exp\left(\frac{\Ntrain^2\left(\sumStatsTrain\right)^T\bm{\Phi}^{-1}\sumStatsTrain}{2}\right)\prod_{j=1}^P\priorSD_j^{-1}\pi\left(\priorSD^2_j\right)d\priorSD^2_1...d\priorSD^2_P\\
			&\geq C_1\int\exp\left(\frac{\Ntrain^2(\sumStatsTrain)^T\bm{\Phi}^{-1}\sumStatsTrain}{2}\right)\prod_{j=1}^P\left(\Ntrain\tau^2\priorSD_j^2\eigenval_{\text{ref,max}}+1\right)^{-1/2}\pi\left(\priorSD^2_j\right)d\priorSD^2_1...d\priorSD^2_P\\
			&= C_2\int\exp\left(\frac{\Ntrain^2(\sumStatsTrain)^T\bm{\Phi}^{-1}\sumStatsTrain}{2}\right)\prod_{j=1}^P\pi'\left(\priorSD^2_j\right)d\priorSD^2_1...d\priorSD^2_P\\
			&\geq C_3\int\exp\left(\frac{\Ntrain(\sumStatsTrain)^T\VRef(\eigenvalRef+\Ntrain^{-1}\tau^{-2}\priorSDmin^{-2}\bm{I})^{-1}\VRef^T\sumStatsTrain}{2}\right)\\
			&\hspace*{1cm}\pi'\left(\priorSDmin^2\right)\pi'\left(\priorSD^2_1,...,\priorSD^2_P\given \priorSDmin^2\right)d\priorSD^2_1...d\priorSD^2_P\\
			&\geq C_4\int\exp\left(\frac{\Ntrain\langle\bv_{\text{ref},k},\sumStatsTrain\rangle^2}{2(\eigenval_{\text{ref},k}+\Ntrain^{-1}\tau^{-2}\priorSDmin^{-2})}\right)\pi'(\priorSDmin^2)d\priorSDmin^2,
		\end{split}
	\end{equation}
	where the first inequality follows from Equation \eqref{Phi_det_inequal}, the second inequality follows from the fact $\bm{\Phi} \prec \VRef\left(\Ntrain\eigenvalRef+\tau^{-2}\priorSDmin^{-2}\bm{I}\right)\VRef^T$ and holds for any $k$, and $C_1,C_2,C_3, C_4$ are positive constants. 
	The same logic as in the proof of Theorem~\ref{thm1} tells us when $\operatorname{null}( \ld_{\text{ref}}) \not\subseteq \operatorname{null}(\ld_{\text{train}})$, there exists an eigenvalue/vector pair $\left(\eigenval_{\text{ref},k},\bv_{\text{ref},k}\right)$, such that $\eigenval_{\text{ref},k}=0$ and $\langle\bv_{\text{ref},k},\sumStatsTrain\rangle\neq0$ for almost every realization of summary statistics.
	Now, by Lemma \ref{lemma1}, \ref{lemma2}, and \ref{lemma3}, we know that $\pi'(\priorSDmin^2)$ is heavy-tailed, and hence
	\begin{equation}\label{heavy_euqality}
		\int\exp\left(\frac{\Ntrain^2\tau^2\langle\bv_{\text{ref},k},\sumStatsTrain\rangle^2\priorSDmin^2}{2}\right)\pi'\left(\priorSDmin^2\right)d\priorSDmin^2 = \infty.
	\end{equation}
	Combining Equation (\ref{joint_inequality}) and (\ref{heavy_euqality}), we conclude that the joint nominal posterior is improper.
\end{proof}

\section{Additional Theoretical Result on and Real Data Demonstration of Nominal Posterior's Ill-behavior}\label{proof1}
In this section, we introduce an additional theoretical result that complements Theorem~\ref{thm2} in the main text by demonstrating another problematic behavior in the nominal posterior.

\begin{theorem}\label{thm1}
	The following result holds for almost every realization of $\bsumStats$ under Assumption~\ref{continuous}. 
	Consider the posterior distribution $\bm{\beta} \given \bsumStats, \bm{D}_{\mathrm{ref}},  \sigma_0$ based on the prior $\bm{\beta} \given \sigma_0 \sim \mathcal{N}\left(\bm{0}, \sigma_0^2 \bm{I}\right)$ with standard deviation $\sigma_0 > 0$.
	If $\operatorname{null}( \bm{D}_{\mathrm{ref}} ) \not\subseteq \operatorname{null}(\bm{D}_{\mathrm{train}})$, 
	then the posterior mean $\hat{\bm{\beta}}\left(\sigma_0\right) = \mathbb{E}\left[ \bm{\beta} \given  \bsumStats, \bm{D}_{\mathrm{ref}}, \sigma_0 \right]$ 
	diverges in the limit of the prior becoming uninformative; 
	i.e.,
	$\left\| \hat{\bm{\beta}}(\sigma_0) \right\|^2 \to \infty$ as $\sigma_0 \to \infty$. 
	No such pathological behavior occurs when $ \bm{D}_{\mathrm{ref}} = \bm{D}_{\mathrm{train}} $ or, more generally, when $\operatorname{null}\left( \bm{D}_{\mathrm{ref}} \right) \subseteq \operatorname{null}\left(\bm{D}_{\mathrm{train}}\right)$.
\end{theorem}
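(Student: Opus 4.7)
The plan is to exploit Gaussian--Gaussian conjugacy to obtain a closed-form posterior mean, diagonalize it along the spectrum of $\ldRef$, and then translate the null-space hypothesis into a statement about where $\bsumStats$ sits relative to $\operatorname{col}(\ldRef)$.

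First I would treat the likelihood in (\ref{likelihood_sum}) as a function of $\bbeta$ by expanding the quadratic $(\bsumStats - \ldRef\bbeta)^T \ldRef^{-1}(\bsumStats - \ldRef\bbeta)$ and dropping the term constant in $\bbeta$, which leaves $N\bbeta^T \ldRef \bbeta - 2N\bsumStats^T \bbeta$---an expression that is well-defined regardless of the rank of $\ldRef$ and that matches the form used by the Gibbs samplers the paper targets. Adding $-\|\bbeta\|^2/(2\sigma_0^2)$ from the prior and completing the square yields a Gaussian posterior with mean
\[
\hat{\bbeta}(\sigma_0) \;=\; \bigl(N \ldRef + \sigma_0^{-2}\,\bm{I}\bigr)^{-1} N \bsumStats \;=\; \bigl(\ldRef + \epsilon\,\bm{I}\bigr)^{-1} \bsumStats,
\]
where $\epsilon := (N \sigma_0^2)^{-1}$, so that $\sigma_0 \to \infty$ corresponds to $\epsilon \downarrow 0$.

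Next I would diagonalize via $\ldRef = \VRef \eigenvalRef \VRef^T$ with eigenvalues $\delta_1, \ldots, \delta_P \geq 0$ and eigenvectors $\bv_1, \ldots, \bv_P$, which splits the squared norm as
\[
\bigl\| \hat{\bbeta}(\sigma_0) \bigr\|^2 \;=\; \sum_{i:\,\delta_i > 0} \frac{(\bv_i^T \bsumStats)^2}{(\delta_i + \epsilon)^2} \;+\; \frac{1}{\epsilon^2} \sum_{i:\,\delta_i = 0} (\bv_i^T \bsumStats)^2.
\]
The first sum converges to $\|\ldRef^{+} \bsumStats\|^2 < \infty$ as $\epsilon \downarrow 0$, while the second diverges whenever any coefficient $\bv_i^T \bsumStats$ with $\delta_i = 0$ is nonzero; that is, whenever $\bsumStats \notin \operatorname{col}(\ldRef)$.

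Finally I would translate the null-space hypothesis into this support condition on $\bsumStats$. Modeling $\bsumStats$ as in (\ref{likelihood_sum0}) with $\ld = \bm{D}_{\text{train}}$ places it almost surely in $\operatorname{col}(\bm{D}_{\text{train}})$, since both its mean $\bm{D}_{\text{train}} \bbeta$ and its degenerate-Gaussian noise are supported there. Because $\operatorname{null}(A) = \operatorname{col}(A)^\perp$ for symmetric PSD $A$, the condition $\operatorname{null}(\ldRef) \subseteq \operatorname{null}(\bm{D}_{\text{train}})$ is equivalent to $\operatorname{col}(\bm{D}_{\text{train}}) \subseteq \operatorname{col}(\ldRef)$, which forces $\bsumStats \in \operatorname{col}(\ldRef)$ and gives the finite limit $\ldRef^{+} \bsumStats$. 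Conversely, under $\operatorname{null}(\ldRef) \not\subseteq \operatorname{null}(\bm{D}_{\text{train}})$ one can pick $\bm{v}^{\star} \in \operatorname{null}(\ldRef)$ with $\bm{v}^{\star T} \bm{D}_{\text{train}} \bm{v}^{\star} > 0$ (strict positivity follows because PSD and $\bm{D}_{\text{train}} \bm{v}^{\star} \neq \bm{0}$), so $\bm{v}^{\star T} \bsumStats$ is a univariate Gaussian of strictly positive variance under (\ref{likelihood_sum0}) and is almost surely nonzero, delivering the divergence. The main obstacle will be this last step: making the degenerate-Gaussian support argument airtight so that the algebraic null-space mismatch truly produces a nonzero projection of $\bsumStats$ onto $\operatorname{null}(\ldRef)$, rather than merely permitting one.
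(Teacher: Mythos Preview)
Your proposal is correct and follows essentially the same route as the paper's proof: compute the Gaussian posterior mean $(\Ntrain\ldRef+\sigma_0^{-2}\bm{I})^{-1}\Ntrain\bsumStats$, diagonalize along the eigenbasis of $\ldRef$, and show that $\|\hat{\bbeta}(\sigma_0)\|^2$ diverges precisely when $\bsumStats$ has a nonzero component in $\operatorname{null}(\ldRef)$, then link this to the null-space hypothesis via $\bsumStats\in\operatorname{col}(\bm{D}_{\text{train}})$. Your final step is in fact slightly sharper than the paper's: rather than assert (or add as an assumption) that the relevant projection is nonzero, you use the model (\ref{likelihood_sum0}) to note that $\bm{v}^{\star T}\bsumStats$ has strictly positive variance $\bm{v}^{\star T}\bm{D}_{\text{train}}\bm{v}^{\star}/\Ntrain$ and is therefore almost surely nonzero.
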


Ordinarily, when fitting a linear regression model of the form \eqref{lm_model} under the prior $\bm{\beta} \given \sigma_0 \sim \mathcal{N}\left(\bm{0}, \sigma_0^2 \bm{I}\right)$, the posterior mean converges to a finite limit as $\sigma_0 \to \infty$;
this is true even when $N < p$. 
Theorem~\ref{thm1} shows that the nominal posterior \eqref{nominal_post} defies this expected behavior and pulls the posterior estimate towards infinity when there is a mismatch between the LD reference and GWAS training samples.
The analysis in our proof below reveals that this is caused by the approximate likelihood \eqref{likelihood_sum} being falsely informative in the null space of $\bm{D}_{\text{ref}} = \bm{X}_\textrm{ref}^T \bm{X}_\textrm{ref}/N_\textrm{ref}$.

\begin{proof}
	Let $\ldRef = \VRef \eigenvalRef \VRef^T$ denote the eigenvalue decomposition of the LD correlation matrix estimated from external reference data.
	To establish the claimed behaviors of the nominal posterior, we first express the posterior mean in terms of the eigenvalues and eigenvectors of $\ldRef$ as
	\begin{align*}
		\hat{\bm{\mu}}
		&= \left(\Ntrain \ldRef+\sigma_0^{-2}\bm{I}\right)^{-1}\Ntrain\sumStatsTrain \\
		&= \VRef\left(\eigenvalRef+\frac{1}{\Ntrain}\sigma_0^{-2}\bm{I}\right)^{-1}\VRef^T\sumStatsTrain \\
		&= \sum_{\ell=1}^{P}\frac{\langle\bv_{\text{ref},\ell},\sumStatsTrain\rangle}{\eigenval_{\text{ref},\ell}+\frac{1}{\Ntrain}\sigma_0^{-2}}\bv_{\text{ref},\ell},
	\end{align*}
	where $\eigenval_{\text{ref},\ell}\geq0$ and $\bv_{\text{ref},\ell}$ denote the eigenvalues and eigenvectors of $\ld_{\text{ref}}$.
	We can hence express the squared norm of $\hat{\bm{\mu}}$ as
	\begin{equation}\label{mu_t_mu}
		\begin{split}
			\| \hat{\bm{\mu}} \|^2
			&= \sum_{l=1}^{P}\frac{\langle\bv_{\text{ref},\ell},\sumStatsTrain\rangle^2}{\left(\eigenval_{\text{ref},\ell}+\frac{1}{\Ntrain}\sigma_0^{-2}\right)^2}.
		\end{split}
	\end{equation}
	
	We now use the Equation \eqref{mu_t_mu} to study the behavior of $\| \hat{\bm{\mu}} \|$.
	We first consider the mismatched case $\operatorname{null}( \ld_{\text{ref}}) \not\subseteq \operatorname{null}(\ld_{\text{train}})$. This means that there exists an eigenvector $\bv_{\text{ref},k}$ such that $\bv_{\text{ref},k}\in\operatorname{null}(\ldRef)$, with corresponding eigenvalue $\eigenval_{\text{ref},k} = 0$, and $\bv_{\text{ref},k}\not\in\operatorname{null}( \ld_{\text{train}})$.
	The latter fact means that $\bv_{\text{ref},k}$ must have a non-zero component in $\colsp( \ld_{\text{train}} )$ since $\operatorname{null}( \ld_{\text{train}}) = \colsp( \ld_{\text{train}} )^\perp$.
	This means that, by Assumption~\ref{continuous}, we have $\langle\bv_{\text{ref},k},\sumStatsTrain\rangle\neq0$ for almost every realization of summary statistics.
	For such summary statistics, the Equation \eqref{mu_t_mu} tells us that
	$$\big\| \hat{\bm{\mu}} \big\| 
	\geq 
	\Ntrain\sigma_0^2\lvert\langle\bv_{\text{ref},k},\sumStatsTrain\rangle\rvert,$$
	where the right hand side tends to $\infty$ as $\sigma_0\rightarrow\infty$.
	
	We now consider the matched case $\operatorname{null}( \ld_{\text{ref}}) \subseteq \operatorname{null}(\ld_{\text{train}})$. 
	In this case, since $\sumStatsTrain\in\operatorname{col}(\ld_{\text{train}})=\operatorname{null}(\ld_{\text{train}})^\perp$, we have $\langle\bv_{\text{ref},k},\sumStatsTrain\rangle=0$ for every eigenvector $\bv_{\text{ref},k}\in\operatorname{null}( \ld_{\text{ref}})\subseteq \operatorname{null}(\ld_{\text{train}})$ associated with zero eigenvalue.
	Hence the Equation \eqref{mu_t_mu} becomes
	$$\big\| \hat{\bm{\mu}} \big\|^2 
	= \sum_{\ell \, : \, \eigenval_{\text{ref}, \ell} \neq 0} \frac{\langle\bv_{\text{ref},\ell},\sumStatsTrain\rangle^2}{\left(\eigenval_{\text{ref},\ell}+\frac{1}{\Ntrain}\sigma_0^{-2}\right)^2}
	\leq \sum_{\ell \, : \, \eigenval_{\text{ref}, \ell} \neq 0} \frac{\langle\bv_{\text{ref},\ell},\sumStatsTrain\rangle^2}{\eigenval_{\text{ref},\ell}^2} <\infty.$$
	The norm $\| \hat{\bm{\mu}} \|$ thus remains bounded even as $\sigma_0\rightarrow\infty$.
\end{proof}

We now demonstrate the consequence of Theorem~\ref{thm1} in the matched and mismatched cases using real data.
We use data from UK Biobank and consider BMI as an outcome and 200 consecutive SNPs from the first LD block of chromosome 22 as predictors. 
For the matched case, we obtain both summary-statistics and LD matrix from the same approximately 37K unrelated white individuals. 
For the mismatched case, we obtain the input data of the same size but from two distinct sets of individuals.
Figure \ref{post_gaussian_mismatch} plots the posterior mean as a function of $\sigma_0$ under these two scenarios, clearly demonstrating the two distinct behaviors of the posterior as predicted by the theorem.

\begin{figure}[!htb]
	\centering
	\includegraphics[width=9.5cm]{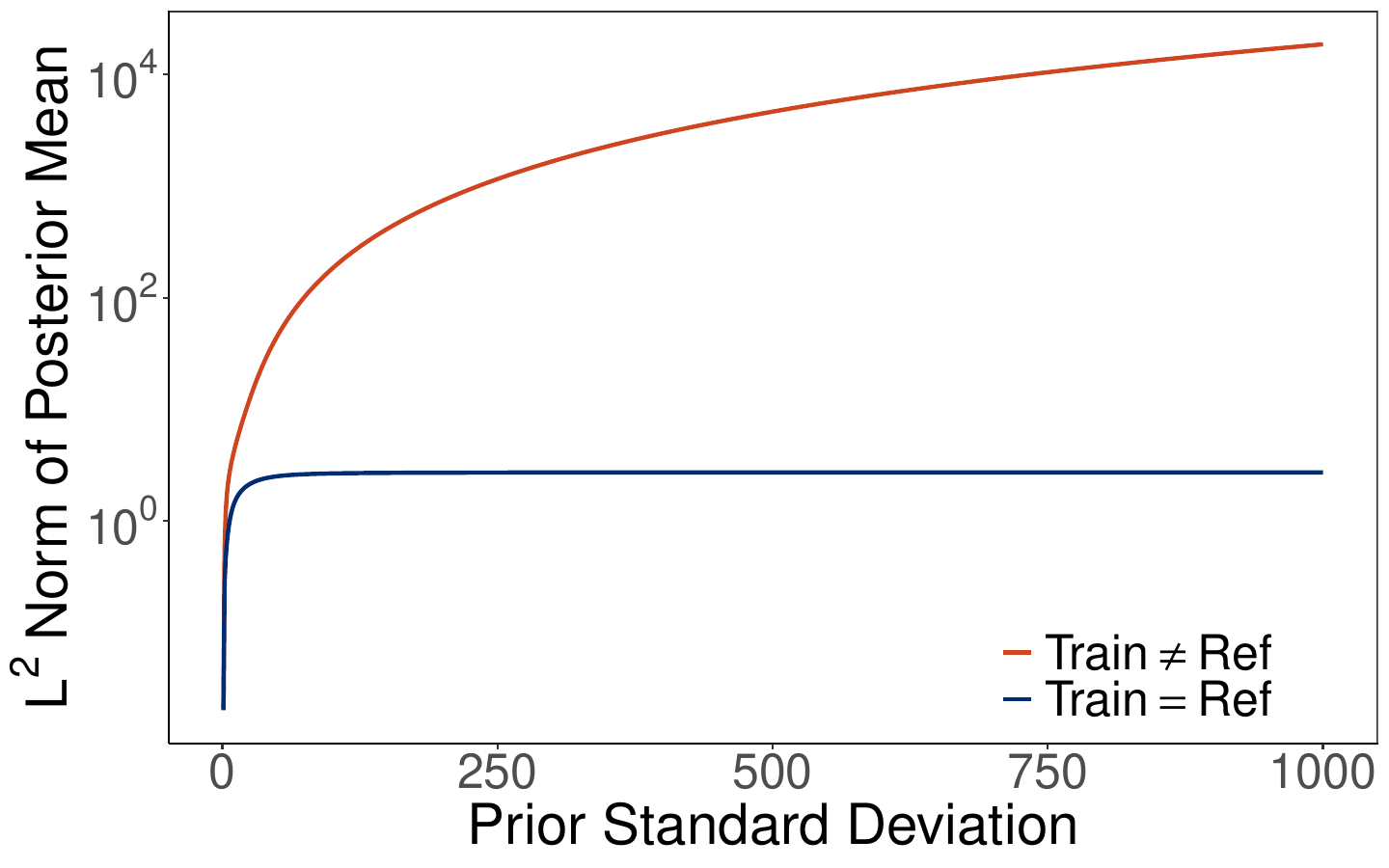}
	\caption{%
		The $L^2$ norm of the posterior mean of $\bm{\beta}$ under different values of the prior standard deviation $\sigma_0$ when the reference sample matches the training sample ($\ld_{\text{ref}} =\ld_{\text{train}}$, blue curve) and when the reference sample does not match the training sample ($ \ld_{\text{ref}} \neq \ld_{\text{train}}$, red curve). 
	}
	\label{post_gaussian_mismatch}
\end{figure}
\FloatBarrier

\section{Potential Use of Summary Statistics Component Orthogonal to Reference LD}
\label{sec:model_null_space}

Here we explore the possibility, as mentioned in Section~\ref{lowRankApprox}, of utilizing the information contained in $\bsumStats$ that is incompatible with the reference LD but might be useful in predicting the genetic risk for the target population.

To this end, we observe that the summary statistics $\bsumStats$ can be decomposed into the projected component $\bm{P}_\textrm{ref} \bsumStats \in \colsp(\ldRef)$ and the component $\left( \bm{I} - \bm{P}_\textrm{ref} \right) \bsumStats$ orthogonal to $\colsp(\ldRef)$;
in other words, it can be expressed as
$\bsumStats = \bm{P}_\textrm{ref} \bsumStats + \left( \bm{I} - \bm{P}_\textrm{ref} \right) \bsumStats$.
And our projection approach can be viewed as modeling the information from $\bm{P}_\textrm{ref} \bsumStats$ through the approximate likelihood \eqref{likelihood_sum} as
\begin{equation}\label{eq:proj_summary_likelihood}
	\bm{P}_\textrm{ref} \bsumStats \given \bm{\beta},\bm{D}_{\text{ref}} \sim \mathcal{N}\left( \bm{D}_{\text{ref}}\bm{\beta},\frac{1}{\Ntrain} \ldRef \right),
\end{equation}
while assuming $\left( \bm{I} - \bm{P}_\textrm{ref} \right) \bsumStats$ to provide no information.

Instead of discarding the orthogonal component entirely, we can consider extracting information out of it by modeling its relation to $\bbeta$ through an additional, auxiliary likelihood.
More precisely, we assume $\bm{P}_\textrm{ref} \bsumStats$ and $\left( \bm{I} - \bm{P}_\textrm{ref} \right) \bsumStats$ to provide independent information on $\bbeta$ and assume a likelihood of the form
\begin{equation}
	\label{eq:combined_likelihood}
	L(\bsumStats \given \bbeta, \ldRef)
	= L_\mathrm{aprx} \bigr( \bm{P}_\textrm{ref} \bsumStats \given \bbeta, \ldRef) \thinnerspace
	L_\mathrm{aux} \bigl( 
	\left( \bm{I} - \bm{P}_\textrm{ref} \right) \bsumStats \given \bbeta, \ldRef
	\bigr),
\end{equation} 
where $L_\mathrm{aprx}$ denotes the approximate likelihood \eqref{eq:proj_summary_likelihood}.
In the presence of the data mismatch, there is no way to deduce a likelihood for $\left( \bm{I} - \bm{P}_\textrm{ref} \right) \bsumStats$ from the first principles as we have done for the approximate likelihood in Section~\ref{likehood_sumstat};
the auxiliary likelihood $L_\mathrm{aux}$ hence needs to be chosen heuristically and essentially is another tuning parameter.

One simple and relevant choice of $L_\mathrm{aux}$ is
\begin{equation}\label{eq:null_likelihood}
	\left(\bm{I}-\bm{P}_\textrm{ref}\right) \, \bsumStats\given\bbeta,\ldRef
	\sim\mathcal{N}\left(\sigma_\mathrm{aux}^2\bbeta, \frac{\sigma_\mathrm{aux}^2}{\Ntrain}\bm{I}\right).
\end{equation}
This choice is notable because the use of the corresponding combined likelihood \eqref{eq:combined_likelihood} has the same effect as the LD matrix regularization used by Lassosum (Section~\ref{existingmethods}).
Specifically, under the above auxiliary likelihood, the combined likelihood can be expressed as
\begin{equation}
	\label{eq:combined_likelihood_specific_form}
	\bsumStats \given \bbeta, \ldRef \sim
	\mathcal{N}\left( \left( \ldRef + \sigma_\mathrm{aux}^2 \bm{I} \right) \bbeta, \frac{1}{\Ntrain} \left( \ldRef + \sigma_\mathrm{aux}^2 \bm{I} \right) \right);
\end{equation}
i.e.\ combining the approximate likelihood for $\bm{P}_\textrm{ref} \bsumStats$ with the auxiliary likelihood for $\left( \bm{I} - \bm{P}_\textrm{ref} \right) \bsumStats$ amounts to replacing the reference LD matrix $\ldRef$ with the regularized version 
$\ldRef + \sigma_\mathrm{aux}^2\bm{I} 
\propto 
(1 - \zeta) \ldRef + \zeta \bm{I}$
for $\zeta = \frac{\sigma_\mathrm{aux}^2}{1 + \sigma_\mathrm{aux}^2}$.

Moreover, when setting $\sigma_\mathrm{aux} = 1$ and thereby inducing the LD regularization $ \ldRef + \bm{I} $, the combined likelihood \eqref{eq:combined_likelihood_specific_form} closely emulates the effect of the ad hoc solution to the data mismatch problem used by PRS-CS.
To see this, recall that the posterior of $\bbeta$ conditional on $\bsumStats$, without the projection, is given by
$$\bbeta\given \bsumStats,\ldRef,\tau, \bm{\lambda} 
\sim \mathcal{N}\left(\bm{\Phi}^{-1}\Ntrain \thinnerspace \bsumStats,\bm{\Phi}^{-1}\right),$$
where $\bm{\Phi}=\Ntrain\ldRef+\tau^{-2}\bm{\Lambda}^{-2}$.
PRS-CS's Gibbs sampler regularizes this posterior by imposing a lower bound $\tau^{-2} \lambda_j^{-2} \geq \Ntrain$ and thereby preventing the variance $\bm{\Phi}^{-1}$ from becoming too large.
In other words, PRS-CS's ad hoc fix amounts to replacing $\bm{\Phi}$ with a regularized version
$$ \bm{\Phi}_\mathrm{reg}= \Ntrain  \ldRef + \min\left\{ \Ntrain \bm{I}, \, \tau^{-2}\bm{\Lambda}^{-2} \right\}. $$
The combined likelihood \eqref{eq:combined_likelihood_specific_form} with $\sigma_\mathrm{aux} = 1$ achieves an analogous regularization since its effect amounts to replacing $\bm{\Phi}$ with
$$ \widetilde{\bm{\Phi}}_\mathrm{reg}
= \Ntrain\ldRef + \Ntrain \bm{I} + \tau^{-2} \bm{\Lambda}^{-2}, $$
where $\Ntrain \bm{I} + \tau^{-2} \bm{\Lambda}^{-2}$ provides a smoother form of regularization than the thresholding $\min\left\{ \Ntrain \bm{I}, \, \tau^{-2}\bm{\Lambda}^{-2} \right\}$.
While $\Ntrain + \tau^{-2} \lambda_j^{-2} \geq \min \left\{ \Ntrain, \, \tau^{-2} \lambda_j^{-2} \right\}$ holds in general, the two forms of regularization yield similar effects in preventing the numerical issues caused when $\tau^{-2} \lambda_j^{-2} \to 0$, in which case we have $\Ntrain + \tau^{-2} \lambda_j^{-2} \approx \min \left\{ \Ntrain, \, \tau^{-2} \lambda_j^{-2} \right\} = \Ntrain$.

In summary, the extension of our projection approach through the auxiliary modeling of $\left( \bm{I} - \bm{P}_\textrm{ref} \right) \bsumStats$ via Equation~\eqref{eq:null_likelihood} coincides with, and provides a novel interpretation of, the regularization of a reference LD matrix.
We have further shown that the ad hoc fix used by PRS-CS essentially amounts to using the regularized LD matrix $ \ldRef + \bm{I} $ or, equivalently, assuming the auxiliary likelihood with $\sigma_\mathrm{aux} = 1$ given as
\begin{equation}
	\label{eq:aux_likelihood_for_prs_cs}
	\left(\bm{I}-\bm{P}_\textrm{ref}\right) \, \bsumStats \given\bbeta,\ldRef
	\sim\mathcal{N}\left( \bbeta, \frac{1}{\Ntrain}\bm{I} \right).
\end{equation}
This perspective in particular allows us to see that PRS-CS implicitly places substantial weight on the information from training data that is incompatible with reference data.
We numerically confirm, in the supplement Section~\ref{sec:PRS-CS-proj}, that PRS-CS with regularization $ \ldRef + \bm{I} $, in place of the ad hoc constraint, closely tracks the original PRS-CS in its performances.
The empirical success of PRS-CS and its ad hoc approach suggests that it is a worthy future research direction to further explore how to best utilize the information from $\left(\bm{I}-\bm{P}_\textrm{ref}\right)\bsumStats$ or, equivalently, regularize the reference LD matrix.

\section{Auto-tuning Exponent Parameter of PRS-Bridge}\label{PRS-Bridge-auto}

Here we describe the principle behind and the implementation of the auto version of PRS-Bridge as mentioned in Section~\ref{PRSBridge} of the main manuscript.
For the auto version, we fix the percent of eigenvectors projected away for the low-rank LD approximation of Section~\ref{sec:snp_choice_and_ld_approx} as 0.8; 
our numerical studies have demonstrated this choice to yield robust performance across the traits (Section~\ref{sec:PRS-Bridge-tuning}).
This leaves the exponent parameter $\alpha$ as the only remaining tuning parameter.

Under the Bayesian framework, it is conceptually straightforward to treat $\alpha$ as an additional unknown parameter and infer it from the posterior distribution.
However, posterior inference on $\alpha$ via Gibbs sampling is known to be inefficient \citep{polson2014bayesian}.
Instead, therefore, we propose to auto-tune $\alpha$ through maximization of the marginal likelihood $L(\bsumStats \given  \ldRef, \alpha) = \int L(\bsumStats \given \ldRef, \bbeta) \, \pi(\bbeta \given \alpha) \, \mathrm{d} \bbeta$, which is also known as the model evidence \citep{mackay1992bayesian}.
This approach is equivalent to estimating the value of $\alpha$ that coincides with the mode of its marginal posterior $\pi(\alpha \mid \bsumStats, \ldRef)$ under a flat prior.

We maximize the marginal log-likelihood $\log L(\bsumStats \given  \ldRef, \alpha)$ with respect to $\alpha$ through a stochastic gradient descent algorithm as follows.
Fisher’s identity \citep{dempster1977em} tells us that
\begin{equation}\label{Fisher_identity}
	\begin{aligned}
		\nabla_{\alpha} \log L(\bsumStats \given  \ldRef, \alpha)
		&= \nabla_{\alpha} \log \pi\left( \alpha \given \bsumStats, \ldRef \right) \\
		&= \int \nabla_{\alpha} \log \pi\left( \bbeta, \tau, \alpha \given \bsumStats, \ldRef \right)
		\, \pi\left(\bbeta,\tau\given \bsumStats, \ldRef, \alpha \right)
		\, \mathrm{d} \bbeta \, \mathrm{d} \tau,
	\end{aligned}
\end{equation}
where the last expression represents the expectation of $\nabla_{\alpha} \log \pi\left( \bbeta, \tau, \alpha \given \bsumStats, \ldRef \right)$ with respect to the posterior $\pi\left(\bbeta,\tau\given \bsumStats, \ldRef \right)$.
The above identity, therefore, provides a recipe to estimate the gradient of the log marginal likelihood through Monte Carlo integration \citep{delyon1999stochastic_approx}.
The gradient $\nabla_{\alpha} \log \pi \left( \bbeta, \tau, \alpha \given \bsumStats, \ldRef \right)$ can be calculated, by noting that
$\pi \left( \bbeta, \tau, \alpha \given \bsumStats, \ldRef \right)
\propto L \left( \bsumStats \given \ldRef, \bbeta \right) \thinnerspace
\pi\left(\bbeta\given\tau, \alpha \right) \thinnerspace 
\pi(\bbeta) \thinnerspace
\pi(\tau)$, via an expression
\begin{equation}
	\label{eq:conditional_gradient}
	\begin{aligned}
		\nabla_{\alpha} \log \pi \left( \bbeta, \tau, \alpha \given \bsumStats, \ldRef \right)
		&= \nabla_{\alpha} \log	\pi \left(\bbeta\given\tau, \alpha \right) \\
		&= P/\alpha+P\frac{\Gamma'\left(1/\alpha\right)}{\Gamma\left(1/\alpha\right)}1/\alpha^2
		- \sum_{j=1}^P \, \left| \frac{\beta_j}{\tau} \right|^{\alpha} \log \, \left| \frac{\beta_j}{\tau} \right|.
	\end{aligned}
\end{equation}
In other words, we can obtain a stochastic approximation of $\nabla_{\alpha} \log L(\bsumStats \given  \ldRef, \alpha)$ by plugging in posterior samples from $\pi\left(\bbeta,\tau\given \bsumStats, \ldRef, \alpha \right)$ to the expression \eqref{eq:conditional_gradient}.
And we can use these approximate gradients to carry out the stochastic optimization.
In practice, we carry out the optimization in terms of $\log \alpha$ instead of $\alpha$;
this removes the constraint $\alpha>0$ and the gradient $\nabla_{\log \alpha} = \alpha \nabla_{\alpha}$ tends to be more well-behaved.



More explicitly, given the current value $\alpha^{(k)}$, we update the exponent parameter as follows.
We first sample from $\pi\left(\bbeta,\tau \given \bsumStats, \ldRef, \alpha^{(k)} \right)$ with the Gibbs sampler as described in Section~\ref{sec:CG_descrip}.
Using the samples $\left\{ \bbeta^{(m)}, \tau^{(m)} \right\}_{m = 1, \thinnerspace \ldots, \thinnerspace M}$, we approximate the expectation in Equation~\eqref{Fisher_identity}:
\begin{equation}
	\label{eq:stoch_grad_approx_formula}
	\widehat{\nabla}_{\log \alpha}\log \pi\left(\bsumStats\given \ldRef,\alpha^{(k)} \right)
	= P + P \frac{\Gamma'\left(1/\alpha\right)}{\Gamma\left(1/\alpha\right)}1/\alpha
	- \frac{\alpha}{M} \sum_{m=1}^M \sum_{j=1}^P  \, \left| \frac{\beta^{(m)}_j}{\tau^{(m)}} \right|^{\alpha}
	\log \, \left| \frac{\beta^{(m)}_j}{\tau^{(m)}} \right|.
\end{equation}
Finally, we update $\alpha$ via
$$\log \alpha^{(k+1)} = \log \alpha^{(k)} + \eta \widehat{\nabla}_{\log \alpha}\log \pi\left(\bsumStats\given \ldRef,\alpha^{(k)} \right),$$
where we take the step size to be $\eta = 0.001$.

For each update $\alpha^{(k)} \to \alpha^{(k+1)}$, we ran 50 iterations of the Gibbs sampler.
After the update, the Gibbs sampler requires some time to reach the new stationary distribution $\bbeta,\tau \given \bsumStats, \ldRef, \alpha^{(k + 1)}$;
we therefore discard the first 20 samples and use the remaining 30 samples in the gradient approximation \eqref{eq:stoch_grad_approx_formula}.


We note that our auto-tuning technique described above is more generally applicable to Bayesian sparse regressions based on the bridge prior, including when building PRS from individual-level data.
We provide a benchmark of PRS-Bridge-auto performances against PRS-Bridge and PRS-CS-auto in Section~\ref{sec:PRS-Bridge-auto-numerical}.

\section{Details of Conjugate Gradient-accelerated Gibbs Sampler for PRS-Bridge}
\label{sec:CG_descrip}
The bridge prior can be expressed as a scale mixture of normal distributions as \citep{polson2014bayesian}:

\begin{align*}
	\bm{\beta}\given \bm{\lambda},\tau\sim  \mathcal{N}\left(0,\tau^2\bm{\Lambda}^2\right),\text{ }
	\bm{\Lambda} = \text{diag}\left(\lambda_j\right), \
	\pi\left(\lambda_j\right)\propto \lambda_j^{-2} \pi_{st} \left(\lambda_j^{-2}/2\right),
\end{align*}
where $\tau$ is the global scale parameter, 
$\lambda_j$ the local scale parameter, 
and $\pi_{st}$ is the alpha-stable distribution with an index of stability $\alpha/2$.
The Gibbs sampler's steps are as follows:
\begin{enumerate}
	\item We first update $\tau$ by sampling from $\nu = \tau^{-\alpha}$ after marginalizing out $\lambda_j$'s. 
	Assuming a Gamma prior $\pi(\nu)\propto  \nu^{k-1}e^{-\nu\theta}$ and after integrating out $\lambda_j$'s, the posterior conditional of $\nu$ is given by
	$$\nu \given \bm{\beta}\propto \nu^{k+p/\alpha-1}\exp\left(-\nu\left(\theta+\sum_{j=1}^p |\beta_j|^{\alpha}\right)\right).$$
	\item We next update $\lambda_j$'s. 
	While the full conditional does not have a closed form formula, we can sample from it by using the double-rejection algorithm of \cite{devroye2006nonuniform} as implemented in the Python package ``bayesbridge'' in \cite{aki2022prior}.
	\item Finally, we update $\bm{\beta}$ from its full conditional given by, for $\bm{\Phi} = \Ntrain \ldRef + \tau^{-2}\bm{\Lambda}^{-2}$,
	\begin{equation}\label{post_beta}
		\begin{split}
			\bm{\beta}\given \bsumStats,\bm{\lambda},\tau&\sim \mathcal{N}\left(\bm{\Phi}^{-1}\Ntrain\bsumStats,\bm{\Phi}^{-1}\right). 
		\end{split}
	\end{equation}
\end{enumerate}
The highest computational cost arises from generating the multivariate normal distribution in~(\ref{post_beta}). 
By partitioning the whole genome into independent LD blocks, we can update $\bm{\beta}$ independently within LD block. 
To further speed up the computation, we use the conjugate gradient sampler of \cite{aki2022prior} instead of the standard approach based on Cholesky decomposition of precision matrix.
Specifically, we use a two-step procedure to sample from (\ref{post_beta}):
\begin{itemize}
	\item [1)]
	Generate a Gaussian vector $\bm{b}$ from $\mathcal{N}\left(\Ntrain\bsumStats,\bm{\Phi}\right)$ by first generating two independent Gaussian vectors, $\bm{\eta}$ from $\mathcal{N}\left(0,\bm{I}_n\right)$, and $\bm{\delta}$ from $\mathcal{N}\left(0,\bm{I}_p\right)$, and then setting
	$$\bm{b} = \Ntrain\bsumStats + \Ntrain^{1/2} \ldRef^{1/2}\bm{\eta} + \tau^{-1}\bm{\Lambda}^{-1}\bm{\delta}.$$
	
	\item [2)]
	Solve the following linear system for $\bm{\beta}$:
	\begin{equation}\label{eq4}
		\bm{\Phi}\bm{\beta} = \bm{b} \ \text{ where } \, \bm{\Phi} = \Ntrain \ldRef + \tau^{-2}\bm{\Lambda}^{-2}.
	\end{equation}
\end{itemize}
\noindent
For step (2), we use the conjugate gradient method, which is an iterative method to solve the linear system whose computational cost is dominated by matrix-vector multiplication $\bm{\Phi}\bm{\beta}$. 
To speed up the convergence of the CG method, we combine it with the prior-preconditioning strategy developed by  \cite{aki2022prior}.

\section{Details of LD Approximation Strategy}\label{sec:LD_approx}
To evaluate the impact of the choice of LD reference data on PRS methods' performance, we consider two alternative data sources for LD reference: (1) the 1000G reference samples with 489 unrelated individuals of European ancestry, (2) the UK Biobank reference samples with 337,484 unrelated individuals of European ancestry.

We additionally evaluate the impact of the choice of LD structures on the model performance. 
We provide two options to construct LD matrix in PRS-Bridge.
The first ``small-block'' option partitions the LD matrix into 1,703 independent blocks generated by the method of \cite{berisa2015approximately}, which has previously been successfully implemented in PRS-CS. 
Within chromosome 1, for example, we end up with 133 blocks with each consisting on average of 689 SNPs from the HapMap 3 SNP set.
The second ``large-block'' option partition the LD matrix using the function \texttt{snp\_ldsplit} in the R package ``bigsnpr,'' which generates the same block LD matrix as in \cite{prive2022identifying}. 
Within chromosome 1, for example, this yields 36 blocks with each consisting on average of 2,500 SNPs. 

For LDpred2, we consider both of the two LD structures described in the LDpred2 tutorial.
The ``large-block'' version uses the same large-block LD structure as described above and used by PRS-Bridge.
We use the block LD matrix estimate as provided by the LDpred2 software, which is essentially identical to the one used by PRS-Bridge except potentially for minor differences in the quality control procedure.
The ``banded'' version uses a banded LD structure with a default LD radius of 3cM.
The software does not provide a banded LD matrix estimate, so we estimate it using a random subset of 5,000 individuals from the UK Biobank reference samples.
The sub-sampling is done to ease the computational burden of estimating the banded approximation;
we experimented with the sample size of 500, 1,000, 2,000, and 5,000 and observed little improvement in the method's predictive performance once the sample size reached 2,000.

For PRS-CS and Lassosum, we consider the small-block LD structure provided as default in their software. 

\begin{figure}[!htb]
	\centering
	\includegraphics[width=.75\textwidth]{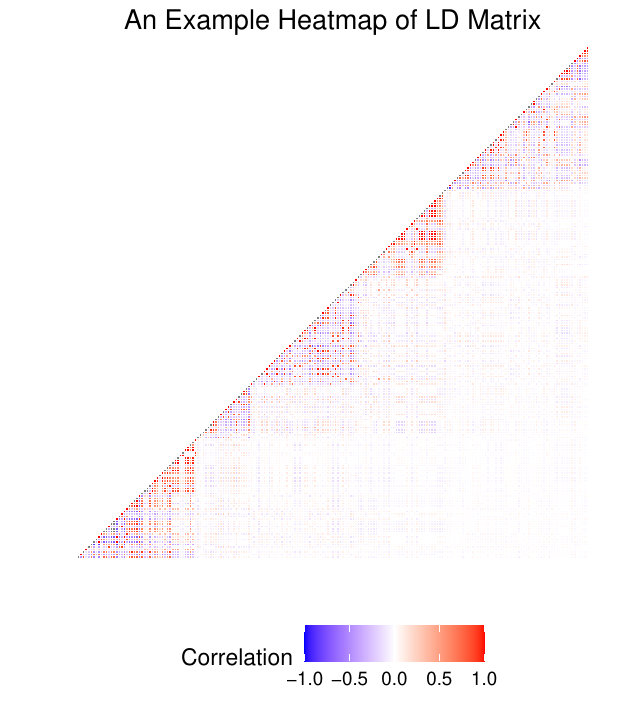}
	\caption{%
		Heatmap of the LD correlation matrix for the 200 SNPs, selected for the purpose of illustration, within an LD block (GRCh37 position: 17284065--17661178) on chromosome 22.
		The matrix is estimated from 
		unrelated European individuals from UK Biobank with complete genotype information.%
	}
	\label{ld}
\end{figure}

\section{Details of the PRS Methods' Implementations}\label{sec:method_implement}
\label{sec:method_details}
The large number of genetic variants across the genome makes for an exceptionally high-dimensional problem.
Within European populations only, there are over eight million bi-allelic SNPs with minor allele frequency above 1\%. 
To keep the computation manageable, PRS methods typically use a pre-selected subset of SNPs, the HapMap 3 SNP list providing the most commonly used one. 
This list contains approximately 1.2 million SNPs that provide good coverage of the genome for European populations. 
Following this standard practice, we include in our analysis only the HapMap 3 bi-allelic SNPs to strike a balance between statistical power and computational feasibility. 

For the implementation of LDpred2, we use the LDpred2-grid algorithm implemented in the R package ``bigsnpr'' (version 1.10.8) \citep{prive2018efficient}.
It conducts a grid search to estimate 
1) the causal SNP proportion with candidate values being a sequence of 21 numbers evenly spaced on a logarithmic scale between $10^{-5}$ and $1$, and 
2) the heritability parameter with the candidate values being the estimates from the LD score regression multiplied by 0.7, 1, or 1.4. 

For PRS-CS, we treated global shrinkage parameter $\tau$ as a tuning parameter and searched among its default candidate values, scaled by $\Ntrain^{-1}$, as suggested by the authors of PRS-CS , $\{10^{-6},10^{-4},10^{-2},1\}$. 
For our proposed PRS-Bridge, we treat the percentage of eigenvalues removed in the low-rank LD matrix approximation step as a tuning parameter with candidate values 0\%, 20\%, 40\%, 60\%, and 80\%.
The eigenvalues of magnitude less than 0.01 are always removed to prevent numerical instability.
We also treat the exponent $\alpha$ as a tuning parameter with candidate values 0.125, 0.25, and 0.5.
All the tuning parameters are selected based on the methods' performance on the tuning datasets.

For PRS-Bridge, we use a scientifically-informed prior on the global scale $\tau$ by taking advantage of the closed-form conditional variance formula
$\operatorname{var}(\beta_j\given\tau)=\frac{\Gamma\left(3/\alpha\right)}{\Gamma\left(1/\alpha\right)}\tau^2$.
Having standardized the outcome and predictors to have unit variance, we can interpret the prior conditional variance of $\beta_j$ as per-SNP heritability; i.e.\ how much each SNP on average explains the variability in the trait.
We can leverage well-established methods such as the LD-score regression to estimate the per-SNP heritability and then use this estimate to construct an informative prior on $\tau$.

\section{Comparison of Computational Speeds}\label{sec:speed}
Here we compare the computational efficiency of PRS-Bridge with LDpred2 and PRS-CS. 
We train PRS models using BMI chromosome 22 summary statistics from UK Biobank study for the three methods under various LD matrix assumptions.
We benchmark the methods' computational speed on Intel(R) Xeon(R) Silver 4310 CPU at 2.10 GHz in the single threaded setting.
The results, as summarized in Table~\ref{tab:speed}, show that PRS-Bridge (Small-block) is on average 3.7 times faster than PRS-CS (Small-block) for each tuning setting.
This allows PRS-Bridge (Small-block) to explore more tuning settings in the comparable amount of time, with an opportunity to further speed up the process by training each tuning setting in an embarrassingly parallel manner.
PRS-Bridge (Large-block) takes roughly twice the computational time of PRS-Bridge (Small-block) since updating each block of the coefficients requires sampling from multivariate normal distributions of higher dimensions. 
LDpred2 (Banded) and LDpred2 (Large-block) are faster than PRS-Bridge and PRS-CS;
this is likely due to the multiple approximations they make in their posterior computation to enforce conditional independence among the coefficients and thus avoid having to sample from multivariate normal distributions.

\FloatBarrier
\begin{table}[htb]
	\caption{Computational efficiency comparison of the three Bayesian PRS methods}
	\centering
	\label{tab:speed}
	\begin{center}
		\begin{tabular}{ |>{\centering\arraybackslash}p{5.5cm}|>{\centering\arraybackslash}p{2.7cm}|>{\centering\arraybackslash}p{2.5cm}|>{\centering\arraybackslash}p{3.5cm}| } 
			\hline
			Method & Total Time (s) & Total Tuning Settings & Average Time Per Tuning Setting (s)\\
			\hline
			PRS-Bridge (Small-block) & 1294.36 & 15 & 86.29\\ 
			PRS-Bridge (Large-block) & 2592.93 & 15 & 172.86\\ 
			PRS-CS (Small-block) & 1315.92 & 4 & 328.98\\ 
			LDpred2 (Banded) & 188.48 & 63 & 2.99\\ 
			LDpred2 (Large-block) & 24.321 & 63 & 0.39\\ 
			\hline
		\end{tabular}
	\end{center}
\end{table}

\section{Quality Control for UK Biobank and Summary Statistics Data}

For the UK Biobank data used in Section~\ref{Real_data_continuous} and \ref{Real_data_binary}, we apply the following quality control procedures.
We restrict our analysis to unrelated genotyped European participants from the UK Biobank. 
We remove SNPs with minor allele frequency (MAF) below 0.01, missing rate over 0.05, and deviation from Hardy-Weinberg equilibrium with $P<10^{-7}$. 

For the GWAS summary statistics data used in Section~\ref{Real_data_binary}, we remove SNPs 
(1) with low sample sizes ($N_j<0.9\times \text{max}(N_j)$)
(2) with outlier marginal effect size $\left(\frac{\hat{\beta}_j}{\sqrt{N}\text{SE}(\hat{\beta}_j)}\right)^2>80$; 
(3) in the long-range LD region of the position 25-35 Megabase in chromosome 6; 
(4) with MAF below 0.01. 

We use PLINK \citep{chang2015second} software for the above quality control.

\section{Results from Additional Benchmark Studies}\label{sec:add_num_result}

This section presents the results from the additional numerical studies.
We evaluate the methods' performances in the same manner as in Section~\ref{Numerical_studies}: the same metrics, datasets (for training, tuning, and validation), and uncertainty quantification based on 100 replications.

\subsection{Effect of Tuning Parameters on PRS-Bridge and PRS-CS Performances}\label{sec:PRS-Bridge-tuning}

Here we study how tuning parameter values affect PRS-Bridge and PRS-CS performances and when their performances are optimized.

We first examine what value of the exponent parameter $\alpha$ in PRS-Bridge yields the optimal performance for each trait on the tuning dataset.
Table~\ref{tab:summary_alpha} reports the optimal values identified for the six continuous and five binary traits in our benchmark study of Sections~\ref{Real_data_continuous} and~\ref{Real_data_binary}.
These optimal values provide insights into the genetic architecture of each trait, larger values suggesting greater polygenicity. 

\begin{table}[!htb]
	\tiny
	\centering
	\caption{Values of PRS-Bridge's exponent parameter found to achieve the optimal performance on the tuning dataset in the real data analyses of Sections~\ref{Real_data_continuous} and~\ref{Real_data_binary}.}
	\label{tab:summary_alpha}
	\small
	\renewcommand{\arraystretch}{1.0} 
	\setlength{\tabcolsep}{0.5pt} 
	
	\begin{tabular}{ >{\centering\arraybackslash}m{3.7cm}>{\centering\arraybackslash}m{1.5cm}>{\centering\arraybackslash}m{1.5cm}>{\centering\arraybackslash}m{5.1cm}>{\centering\arraybackslash}m{1.5cm}>{\centering\arraybackslash}m{1.5cm}}
		\hline
		Trait & \multicolumn{1}{c}{UKBB} & \multicolumn{1}{c}{1kg} & Trait & \multicolumn{1}{c}{UKBB} & \multicolumn{1}{c}{1kg}\\ 
		\hline
		BMI & 0.5 & 0.5 & Breast Cancer & 0.125 & 0.125\\
		Resting Heart Rate & 0.25 & 0.25 & Coronary Artery Disease & 0.25 & 0.5\\
		HDL & 0.25 & 0.5 & Depression & 0.25 & 0.25\\
		LDL & 0.125 & 0.25 &	Inflammatory Bowel Disease & 0.125 & 0.125\\
		APOEA & 0.25 & 0.5 & Rheumatoid Arthritis & 0.25 & 0.25\\ 
		APOEB & 0.25 & 0.25\\
		\hline
	\end{tabular}
\end{table}
\FloatBarrier

We also assess how PRS-Bridge performance depends on the proportion of eigenvectors projected away from the LD matrix.
We evaluate PRS-Bridge's performance on the tuning dataset as we vary the proportion while keeping the exponent parameter $\alpha$ fixed at the optimal value.
As shown in Figures~\ref{fig:Bridge_tuning} and~\ref{fig:Bridge_tuning_disease}, the proportion of LD eigenvectors has substantial performance impact for many of the traits, especially when the smaller 1000G data are used. 
This makes sense since the parameter controls the degree of regularization applied to the empirically estimated LD matrix, which is less stable when the reference sample size is small.
On the other hand, we also find that projecting away 80\% of eigenvectors and retaining the remaining 20\% provides a reasonable default value, yielding robust performance across the traits and providing the rationale for its use in PRS-Bridge-auto (Section~\ref{PRS-Bridge-auto}).
More generally, when using other reference data sources, the projection approach is likely most effective when the number of eigenvectors is chosen in a manner that accounts for the reference sample size.



\begin{figure}[htb]
	\centering
	\includegraphics[width=13cm]{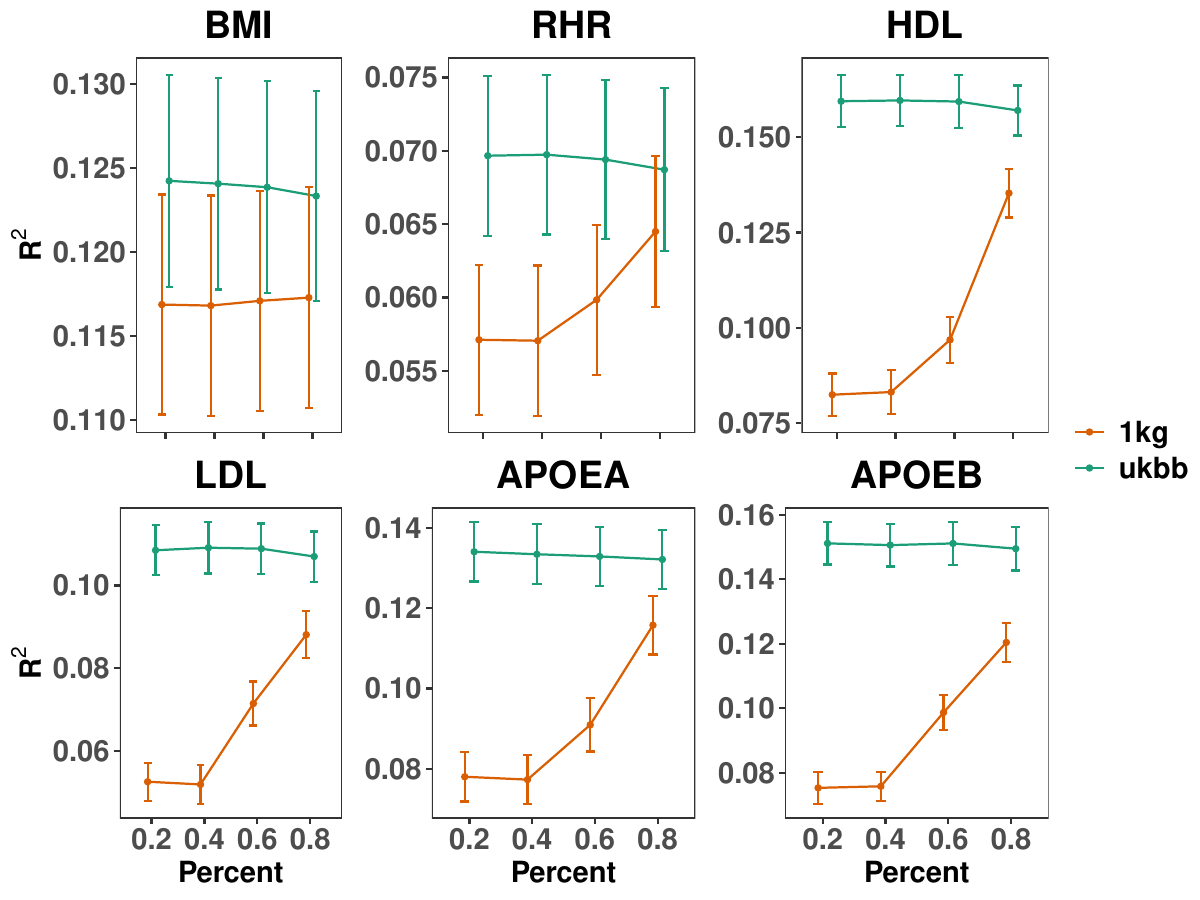}
	\caption{Out-of-sample prediction performance of PRS-Bridge as we vary the proportion of eigenvectors projected away from the LD matrix, evaluated on the six continuous traits.
		We report the average  $R^2$ as well as 1.96 times the standard error across the 100 replications. 
	}
	\label{fig:Bridge_tuning}
\end{figure}

\begin{figure}[htb]
	\centering
	\includegraphics[width=13cm]{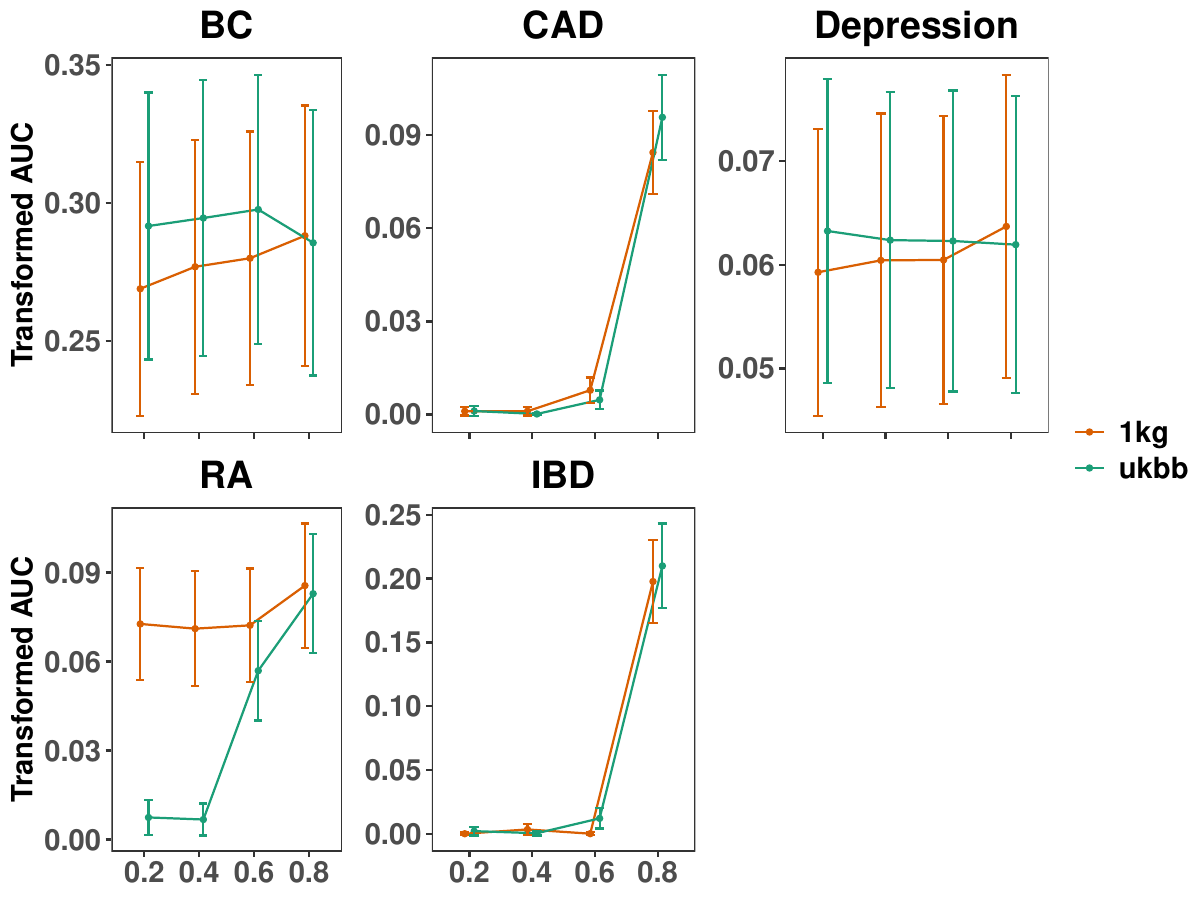}
	\caption{Out-of-sample prediction performance of PRS-Bridge as we vary the proportion of eigenvectors projected away from the LD matrix, evaluated on the five disease traits.
		We report the average transformed AUC as well as 1.96 times the standard error across the 100 replications. 
	}
	\label{fig:Bridge_tuning_disease}
\end{figure}
\FloatBarrier

We also evaluate how the hyperparameter of the PRS-CS prior affects its performance.
PRS-CS uses the Strawderman-Berger prior as its default recommended choice, but allows use of other priors from the three-parameter beta family.
This family of prior can be expressed in the global-local form as \citep{armagan2011three_param_beta}
$$\beta_j \mid \tau, \lambda_j 
\sim \mathcal{N}\left(0, \tau^2 \lambda^2_j\right)
\ \text{ with } \
\pi\left(\lambda^2_j\right)=\frac{\Gamma\left(a+b\right)}{\Gamma\left(a\right)\Gamma\left(b\right)} \left( \lambda_j^2 \right)^{a-1}\left( 1 + \lambda^2_j \right)^{-\left(a+b\right)},$$
recovering the Strawderman-Berger prior when $a = 1$ and $b = 0.5$.
Following the original work of \cite{ge2019polygenic}, we consider two alternative priors obtained under $a = 0.5$ and $a = 1.5$.
Figure~\ref{fig:PRScs_a} shows predictive performances of PRS-CS for the six continuous traits under the varying $a$ values.
We see that varying $a$ has limited impact on PRS-CS's performance, with the default recommended value $a = 1$ indeed achieving competitive performance consistently across the traits.

\begin{figure}[htb]
	\centering
	\includegraphics[width=15cm]{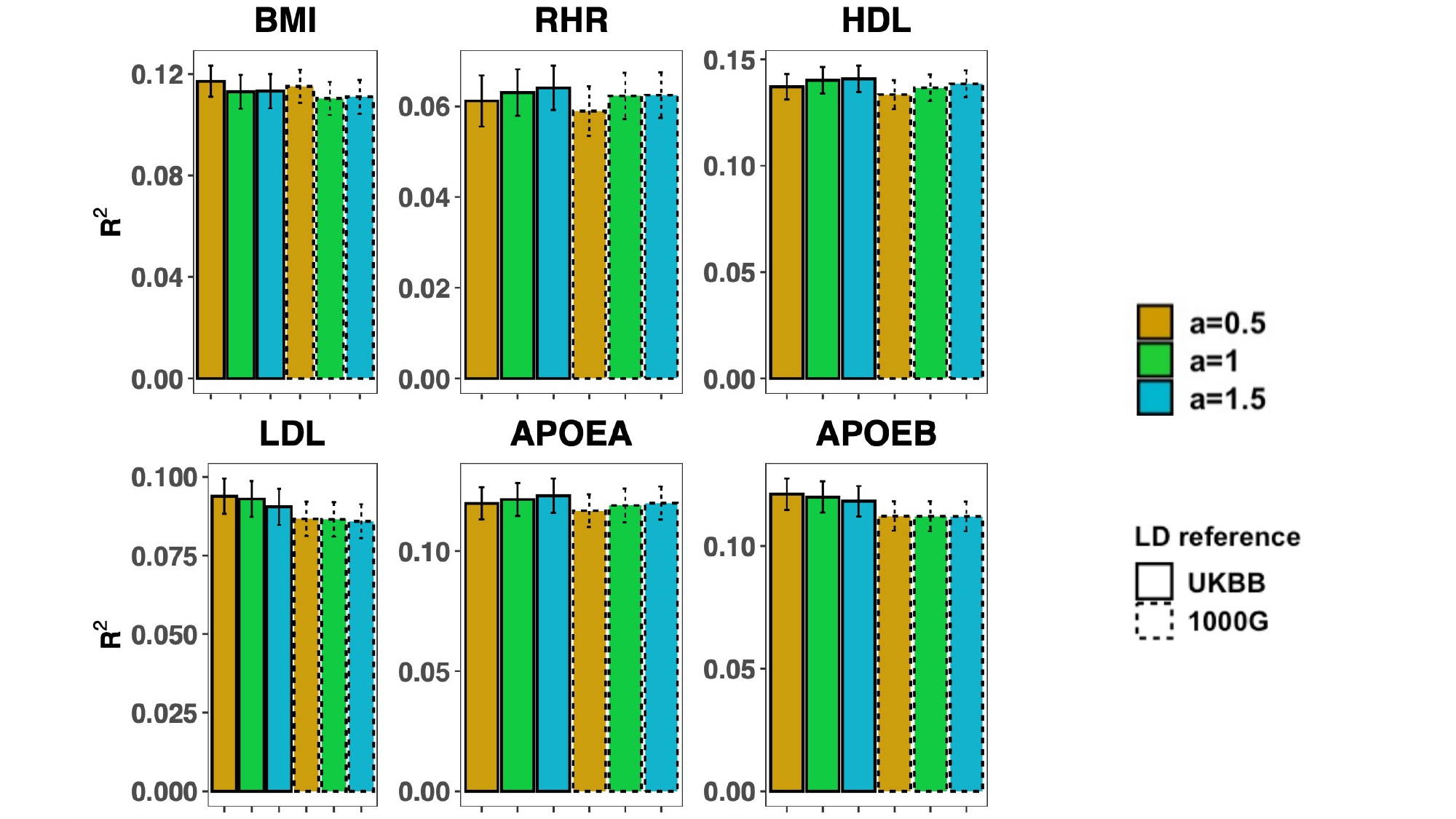}
	\caption{Out-of-sample prediction performance of PRS-CS under different values of the hyperparmeter $a$, evaluated on the six continuous traits.
	}
	\label{fig:PRScs_a}
\end{figure}
\FloatBarrier

\subsection{PRS-CS with Summary Statistics Projection}\label{sec:PRS-CS-proj-binary}

In this section, we assess the performance of PRS-CS with its ad hoc constraint on the prior variance replaced by our summary statistics projection.
This in particular allows us to compare the performances of PRS-Bridge and PRS-CS when the only difference between them is their prior choices.
In applying the projection to PRS-CS, we employ the same low-rank approximation of the LD matrix as used in PRS-Bridge and treat the percentage of eigenvectors projected away as a tuning parameter to be selected from the set of candidate values $\{0.2, 0.4, 0.6, 0.8 \}$.

Figure~\ref{fig:PRS_CS_proj_disease} presents the result of performance comparison among the original PRS-CS, PRS-CS with projection, and PRS-Bridge.
The figure shows that the projection in general does not improve PRS-CS and suggests that the observed performance advantage of PRS-Bridge is attributable to its prior choice, rather than the projection.
The projection improves PRS-CS's performance slightly for the inflammatory bowel disease, but we see that PRS-Bridge still outperforms PRS-CS with projection by a substantial margin. 

\FloatBarrier
\begin{figure}[htb]
	\centering
	\includegraphics[width=15cm]{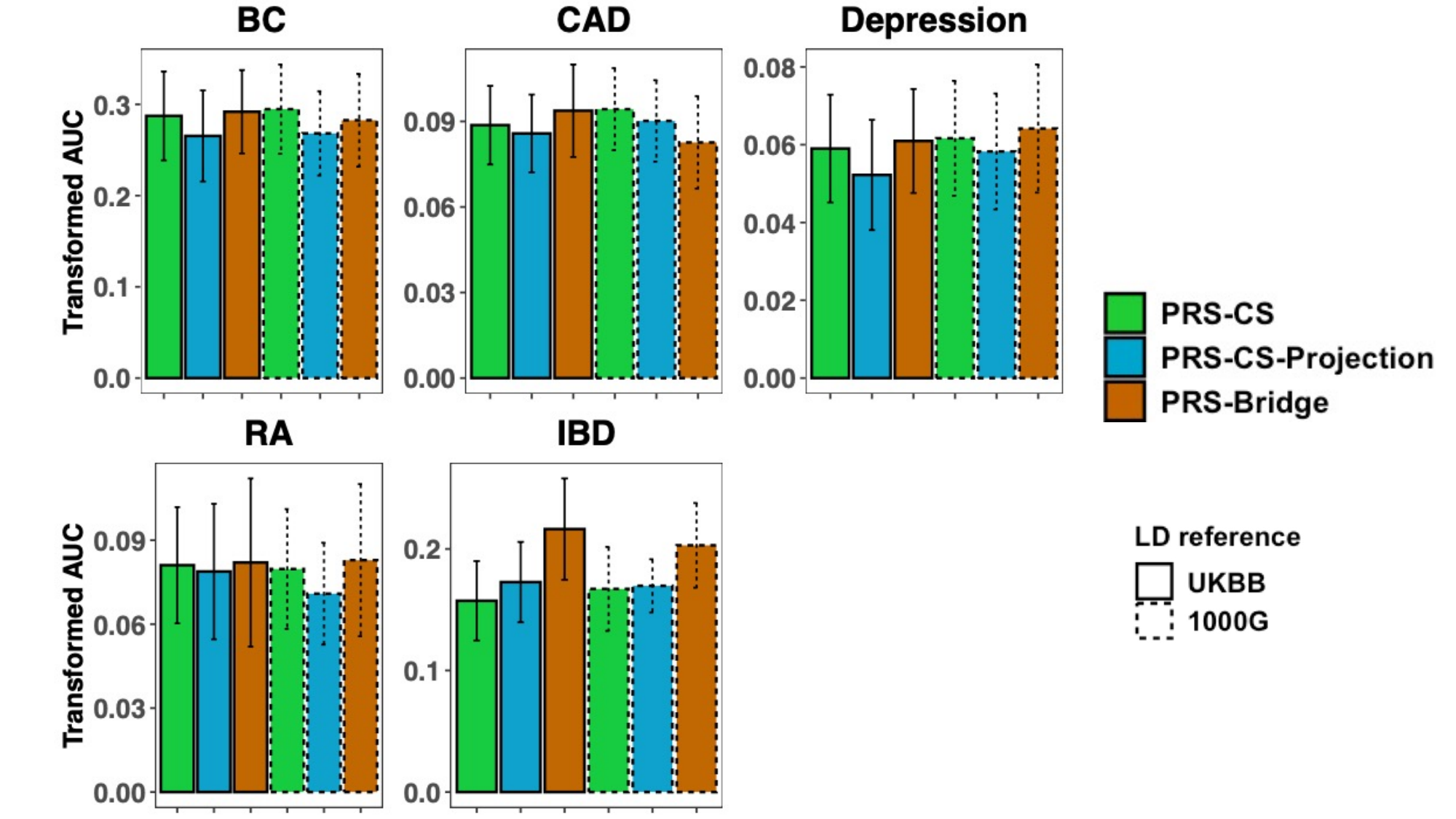}
	\caption{Out-of-sample prediction performances of PRS-CS, PRS-CS with projection, and PRS-Bridge on the five disease traits.
	}
	\label{fig:PRS_CS_proj_disease}
\end{figure}
\FloatBarrier

\subsection{PRS-CS with Alternative Prior Variance Constraints and with Regularized LD}\label{sec:PRS-CS-proj}

This section investigates the potential sensitivity, as pointed out in Section~\ref{sec:ill-behavior_demo}, of PRS-CS performance to the choice of the upper bound $\priorSdBd^2$ for the ad hoc constraint $\tau^2 \lambda_j^2 \leq \priorSdBd^2$ on the prior variance of coefficients.
We also assess the performance of PRS-CS using the regularized LD matrix $\ldRef + \bm{I}$ in place of the constraint because, as discussed in Section~\ref{sec:model_null_space}, 
this version of PRS-CS closely emulates the behavior of the default-setting PRS-CS with $\priorSdBd^2 = 1 / \Ntrain$.

Figure~\ref{PRScs_proj} presents the performance comparison among the different versions of PRS-CS: with ad hoc constraint under varying $\priorSdBd^2$ (including the default setting $\priorSdBd^2 = 1 / \Ntrain$), with projection, and with regularized LD.
The result shows the constraint-based approach to be indeed sensitive to the choice of $\priorSdBd^2$.
Among the different values of $\priorSdBd^2$, the original recommendation $\priorSdBd^2 = 1 / \Ntrain$ delivers the most consistent and competitive performance overall;
this is unsurprising in a sense---its choice as the default value by \cite{ge2019polygenic} has likely been guided by empirical evaluations on real-data PRS applications in the first place.
On the other hand, it is unclear whether this choice remains a reasonable one in more general populations or in related application areas such as transcriptomics, proteomics, and metabolomics
\citep{xu2023atlas,mosley2018probing}.

The projection-based approach achieves competitive performance, especially when the larger LD reference data from UK Biobank are used.
When the smaller reference data from 1000G are used, its performance lags behind the constraint-based approach.
This is likely because, when a reference sample size is small, the population LD structure cannot be estimated well from the reference. 
Correspondingly, the projection ends up removing, along with noise, useful information the summary statistics have about the target population.

As we have theoretically predicted in Section~\ref{sec:model_null_space}, the regularized LD approach delivers performance closely matching that of the constraint-based PRS-CS with $\priorSdBd^2 = 1 / \Ntrain$.
As also discussed in Section~\ref{sec:model_null_space}, the regularized LD approach coincides with modeling of $\left(\bm{I}-\bm{P}_\textrm{ref}\right) \, \bsumStats$ through the auxiliary likelihood \eqref{eq:aux_likelihood_for_prs_cs}, thereby recouping the information discarded by the projection.
The regularized LD approach's superior performance over the projection approach demonstrates that this component of the summary statistics indeed contains useful information on the target population, albeit being incompatible with the reference LD.
In particular, our numerical demonstration of the close correspondence between the constraint-based and regularized LD approaches  clarifies the mechanism through which the original PRS-CS manages to achieve competitive performance despite the ad hoc nature of its handling of the data mismatch problem.

\begin{figure}[!htb]
	\centering
	\includegraphics[width=15cm]{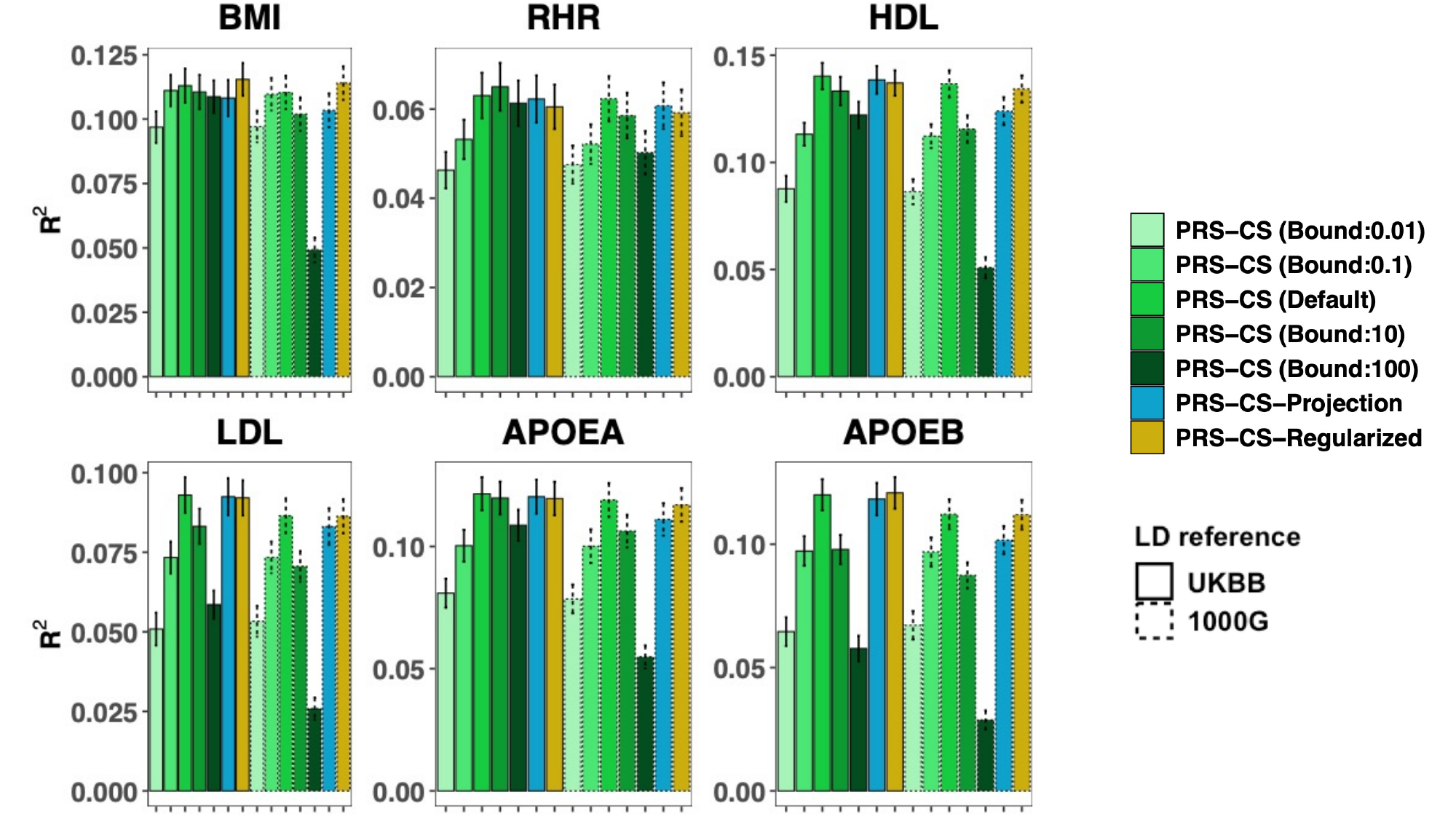}
	\caption{%
		Out-of-sample prediction performances of the different PRS-CS versions on the six continuous traits.
		In the legend, the ``Bound'' label indicates the use of the ad hoc constraint on the prior variance of coefficients, with the upper bound value $\Ntrain \priorSdBd^2$ varied among 0.01, 0.1, 1 (PRS-CS default), 10, and 100.
		The label ``Projection'' indicates the use of the summary statistics projection, and ``Regularized'' of the regularized LD matrix $\ldRef + \bm{I}$.
	}
	\label{PRScs_proj}
\end{figure}
\FloatBarrier

\subsection{Benchmarking PRS-Bridge-auto}\label{sec:PRS-Bridge-auto-numerical}

Here we assess how the auto-tuning approach described in Section~\ref{PRS-Bridge-auto} performs in practice, by benchmarking PRS-Bridge-auto against the manually tuned version of PRS-Bridge and against PRS-CS-auto.

Since PRS-Bridge-auto forgoes tuning on validation data, its performance unsurprisingly tends to lag behind the tuned PRS-Bridge, as seen in Figure~\ref{fig:UKBiobank_auto}~and~\ref{fig:Disease_auto}.
Nonetheless, PRS-Bridge-auto demonstrates strong performance on the continuous traits, with only small reductions in $R^2$ compared to PRS-Bridge.
And, compared to PRS-CS-auto, PRS-Bridge-auto achieves 9.6\% improvement in $R^2$ on average when using the UK Biobank data as reference.
When using the smaller 1000G data as reference, its performance is similar to PRS-CS-auto.
For the binary traits, PRS-Bridge-auto slightly underperforms PRS-CS-auto overall, with markedly weak performance for the coronary artery disease.
The tuned PRS-Bridge outperforms both auto methods by a substantial margin for the inflammatory bowel disease.

Overall, PRS-Bridge-auto appears to performs well when training data are sufficiently informative and hence offer a reasonable alternative when validation data are limited.
On the other hand, the manual tuning is a clearly preferred option when sizeable validation data are available.
While beyond our scope here, a related question is how well the PRS model from each method generalizes; 
i.e.\ how well it performs on populations that differ from those in the validation data source.
It could be that forgoing of validation data from a specific population gives the auto versions some advantages in terms of generalizability.
Similarly, the PRS methods with larger numbers of tuning parameters may pay some prices in terms of generalizability.

\begin{figure}[htb]
	\centering
	\includegraphics[width=15cm]{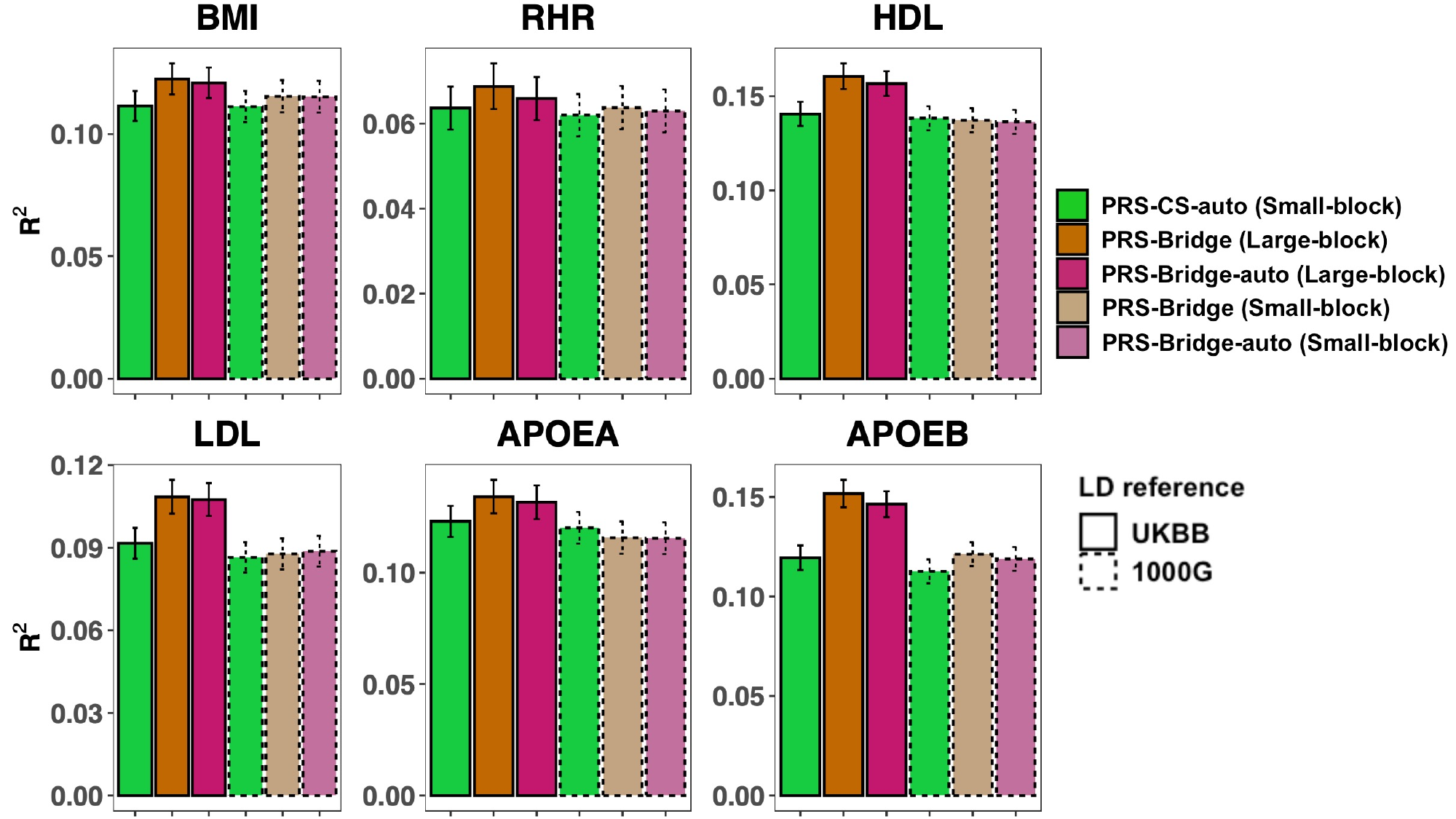}
	\caption{Out-of-sample prediction performances of PRS-CS-auto, PRS-Bridge, and PRS-Bridge-auto on the six continuous traits.
	}
	\label{fig:UKBiobank_auto}
\end{figure}

\FloatBarrier
\begin{figure}[ht]
	\centering
	\includegraphics[width=15cm]{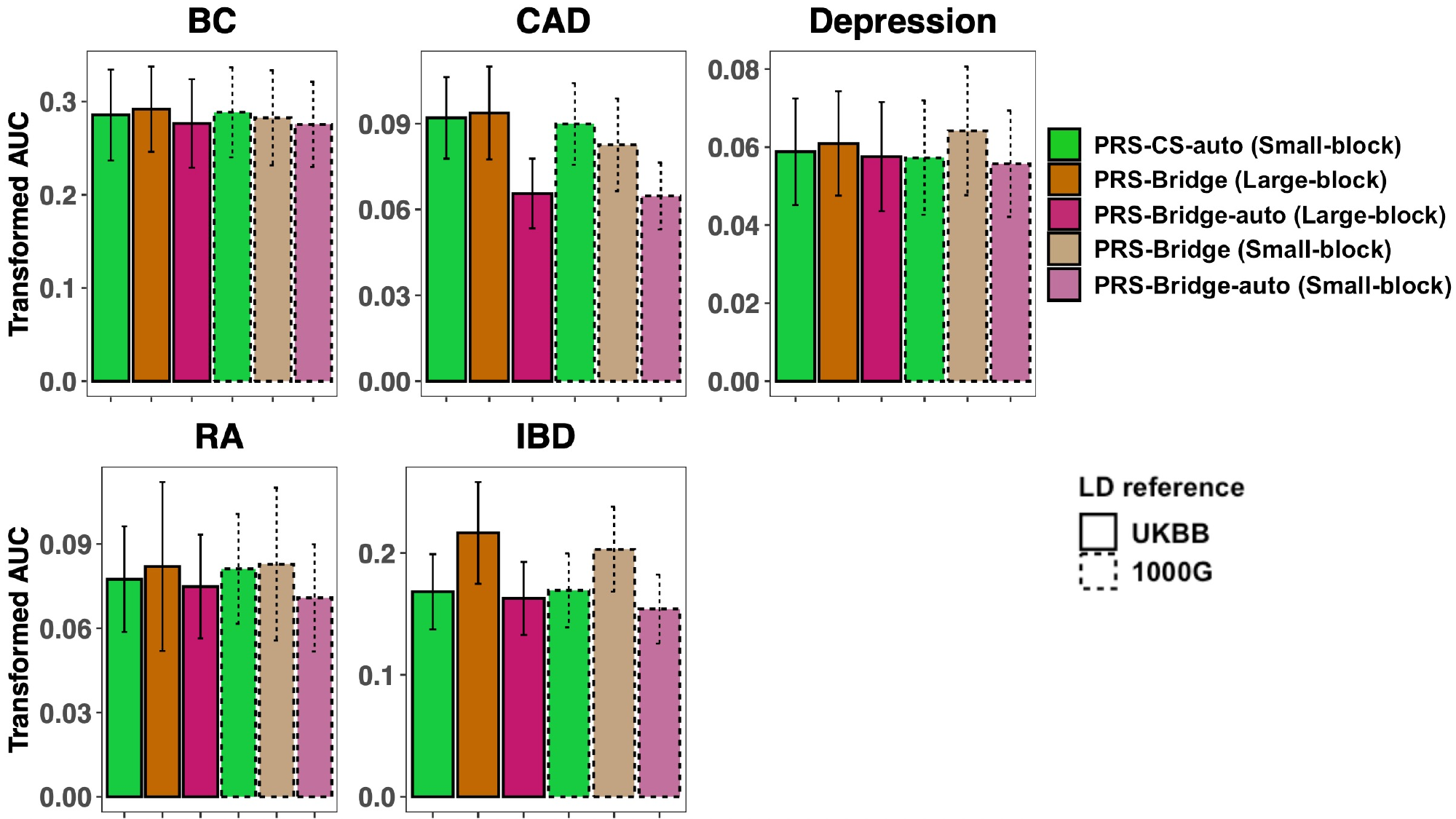}
	\caption{Out-of-sample prediction performances of PRS-CS-auto, PRS-Bridge, and PRS-Bridge-auto on the five binary disease traits.}
	\label{fig:Disease_auto}
\end{figure}
\FloatBarrier

\section{Relative Efficiency of PRS Methods}
\label{sec:relative_performance}

Here we complement the benchmark results in Section~\ref{Numerical_studies} by quantifying the PRS methods' performances and their variations in a relative sense.
Specifically, for each of the 100 replications under different validation-test data splits, we take the resulting out-of-sample $R^2$ or the transformed AUC, and calculate their ratios relative to Lassosum's.
This allows us to better assess whether one method retains an advantage over another as the test dataset varies; 
in contrast, such relative advantages can get obscured when quantifying predictive uncertainty in an absolute sense, as we have done in Figures~\ref{fig:UKBiobank}~and~\ref{fig:Disease} 

Figure~\ref{fig:UKBiobank_RE} and \ref{fig:disease_RE} below summarize our benchmark results in this relative manner, paralleling Figure~\ref{fig:UKBiobank} and~\ref{fig:Disease} summarizing the results in the absolute manner.
Here we can see that PRS-Bridge is often superior to other methods in a statistically meaningful way, even when the absolute error bars overlap in Figure~\ref{fig:UKBiobank} and~\ref{fig:Disease}.

\FloatBarrier
\begin{figure}[htb]
	\centering
	\includegraphics[width=15cm]{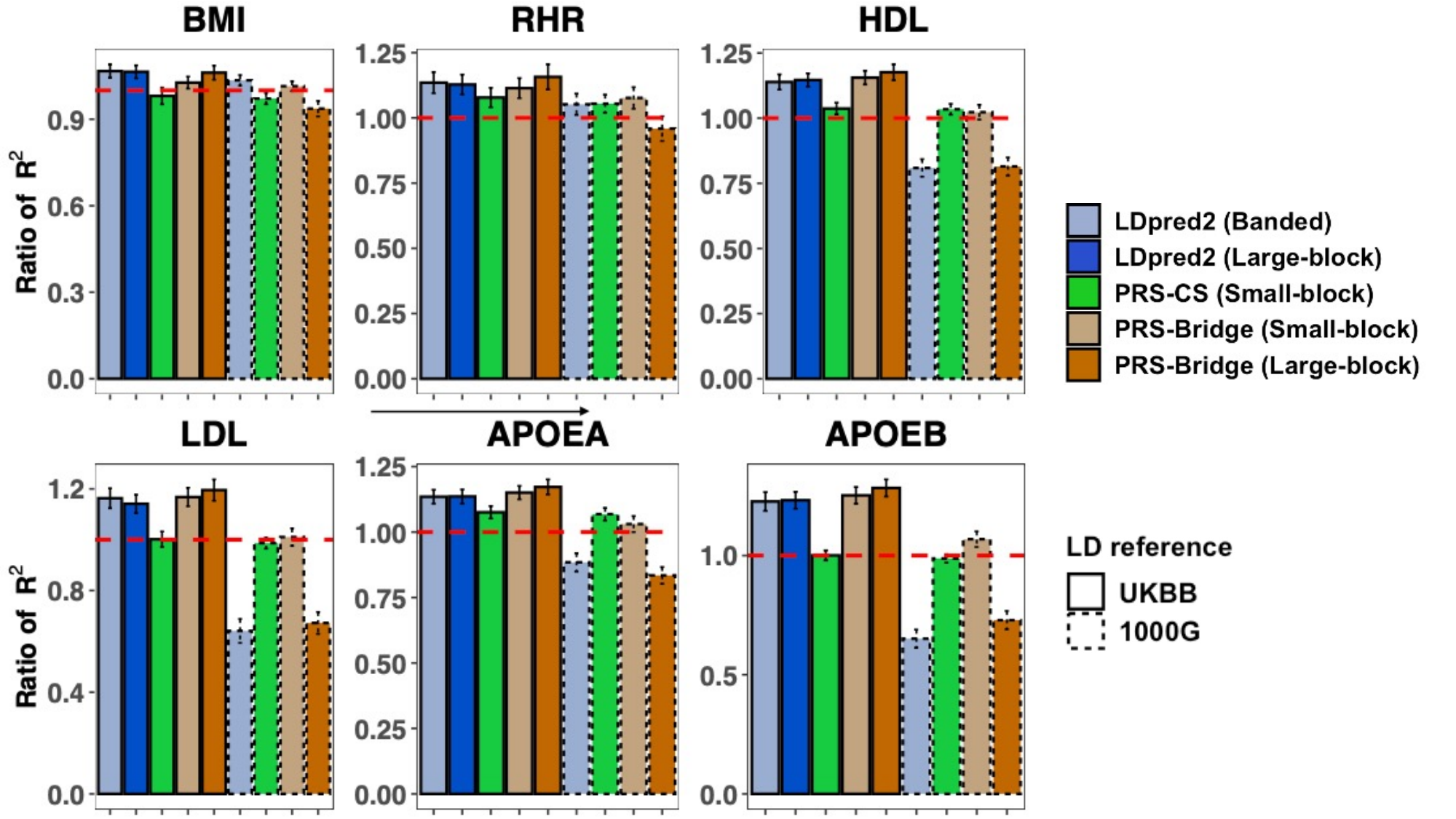}
	\caption{Ratios of out-of-sample $R^2$ of LDpred2, PRS-CS, and PRS-Bridge relative to Lassosum on the six continuous traits.
		The methods are implemented in the same manners as described in the caption of Figure~\ref{fig:UKBiobank}.
	}
	\label{fig:UKBiobank_RE}
\end{figure}
\FloatBarrier

\FloatBarrier
\begin{figure}[htb]
	\centering
	\includegraphics[width=15cm]{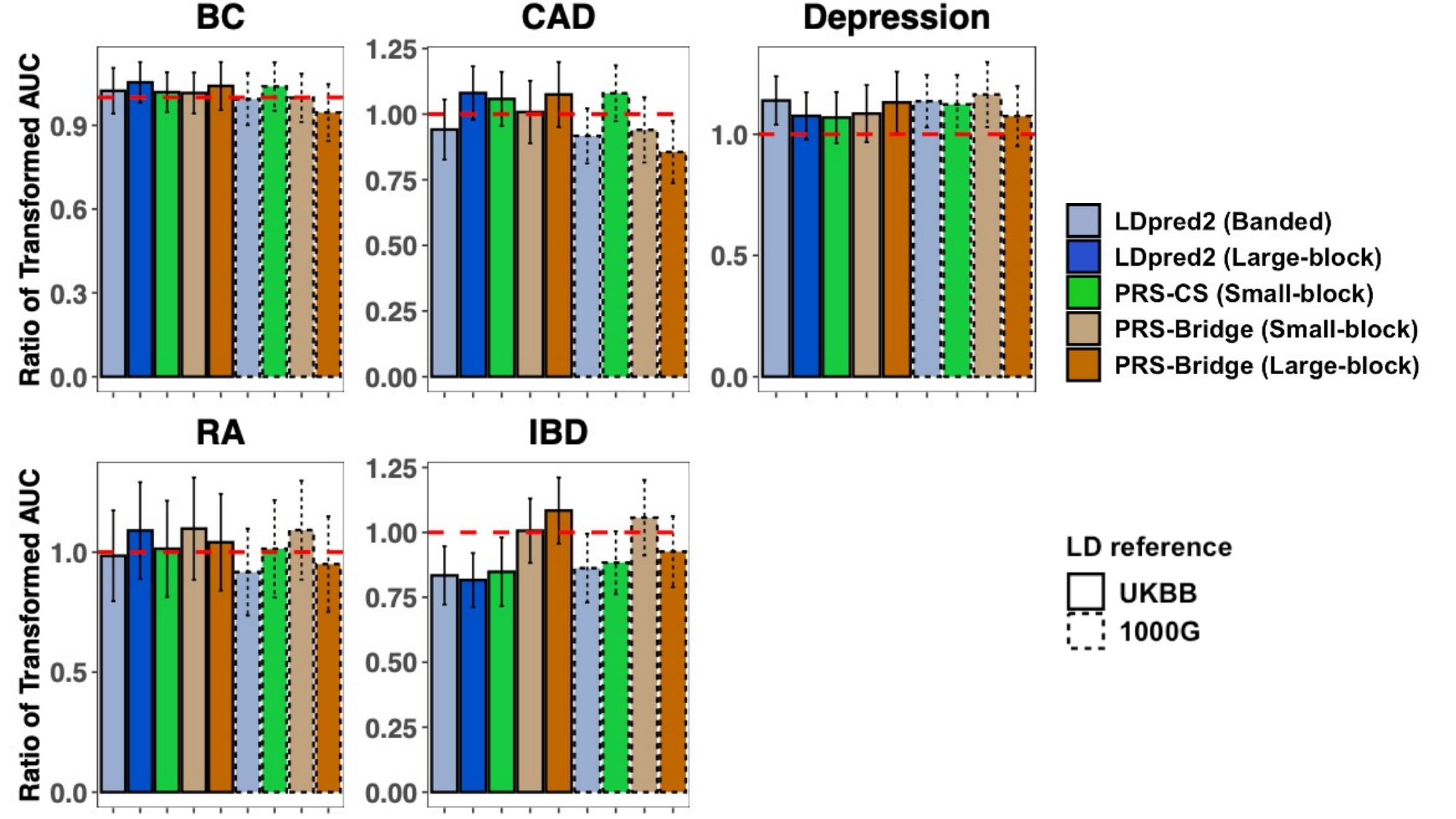}
	\caption{Ratios of out-of-sample transformed AUCs of LDpred2, PRS-CS, and PRS-Bridge relative to Lassosum on the five binary disease traits.
		The methods are implemented in the same manners as described in the caption of Figure~\ref{fig:Disease}.
	}
	\label{fig:disease_RE}
\end{figure}
\FloatBarrier

\end{document}